\definecolor{grey}{cmyk}{0.0,0.0,0.0,0.35}
\newtheorem{definition}{Definition}
\newtheorem{remark}{Remark}
\newtheorem{example}{Example}
\newtheorem{lemma}{Lemma}
\newtheorem{theorem}{Theorem}
\newcounter{Romancnt}
\newcommand{\romannr}[1]{\setcounter{Romancnt}{#1}\roman{Romancnt}}
\newcommand{\wash}{\mathpalette\mywash}
\newcommand{\mywash}[2]{\setbox0=\hbox{$\m@th#1{#2}$}\wd0=0pt\box0}
\newcommand{\arsdev}{\mathrel{\wash{\,\,\,\,\circ}\mathord{\longrightarrow}}}
\def\lubd#1#2#3{%
 \def\next##1##2{%
  \setbox0=\hbox{$##1\vphantom{#3}\@ifempty{#1}{}{_{\vphantom{#1}}}%
   \@ifempty{#2}{}{^{#2}}$}%
  \setbox1=\hbox{$##1\vphantom{#3}\@ifempty{#1}{}{_{#1}}%
   \@ifempty{#2}{}{^{\vphantom{#2}}}$}%
  \setbox2=\vbox{\hbox to\wd0{}\hbox to\wd1{}}%
  {\hskip\wd2\hskip-\wd0\box0\hskip-\wd1\box1{#3}}%
 }%
 \mathpalette\next{}%
}
\def\rubd#1#2#3{#3\@ifempty{#1}{}{_{#1}}\@ifempty{#2}{}{^{#2}}}
\def\rubd#1#2#3{{\rubd{#1}{#2}{#3}}}
\def\ip#1#2{#1_{#2}}
\def\Rho{{P}}
\def\Chi{{X}}
\def\Tau{{T}}
\def\isdefd{\mathrel{{:}{=}}}
\def\funap#1#2{#1(#2)}
\def\binap#1#2#3{#2\mathbin{#1}#3}
\def\relap#1#2#3{#2\mathrel{#1}#3}
\def\hastype{\mathbin{:}}
\def\strlft{\langle}
\def\strrgt{\rangle}
\def\strsep{,}
\def\str#1{\strlft#1\strrgt}
\def\strtwo#1#2{\str{#1\strsep#2}}
\def\setstr#1{\{#1\}}
\def\sseteq{{=}}
\def\seteq{\relap{\sseteq}}
\def\ssetle{{\subseteq}}
\def\setle{\relap{\ssetle}}
\def\ssetge{{\supseteq}}
\def\setge{\relap{\ssetge}}
\def\Nat{\mathbb{N}}
\def\sfunrg{\funrg{\,}}
\def\sfungr{\mathsf{TS}}
\def\funrg{\mea}
\def\fungr{\funap{\sfungr}}
\def\anat{{n}}
\def\natzer{{0}}
\def\natone{{1}}
\def\nattwo{{2}}
\def\snatge{{\geq}}
\def\natge{\relap{\snatge}}
\def\aidx{{i}}
\def\astr{{s}}
\def\bstr{{t}}
\def\cstr{{u}}
\def\aistr{\ip{\astr}}
\def\bistr{\ip{\bstr}}
\def\stremp{{\varepsilon}}
\def\aArs{{\Rightarrow}}
\def\sstpsrc{\mathsf{src}}
\def\sstptgt{\mathsf{tgt}}
\def\aars{{\rightarrow}}
\def\astp{\phi}
\def\bstp{\psi}
\def\cstp{\chi}
\def\aStp{\Phi}
\def\bStp{\Psi}
\def\cStp{\Chi}
\def\arsa{\relap{\aars}}
\def\aobj{{a}}
\def\bobj{{b}}
\def\aarsdev{{\arsdev}}
\def\arsdeva{\relap{\aarsdev}}
\def\acnv{{\gamma}}
\def\aicnv{\ip{\acnv}}
\def\bcnv{{\delta}}
\def\ccnv{{\zeta}}
\def\dcnv{{\eta}}
\def\scnvsco{{\cdot}} 
\def\cnvsco{\binap{\scnvsco}}
\def\sscnveqv{{\equiv}}
\def\sscnveqva{\relap{\sscnveqv}}
\def\scnveqv{{\sscnveqv}}
\def\cnveqv{\relap{\scnveqv}}
\def\sscnvge{{\geqslant}}
\def\sscnvgea{\relap{\sscnvge}}
\def\scnvge{\sscnvge}
\def\cnvge{\relap{\scnvge}}
\def\ares{{/}}
\def\resa#1#2{#1\ares#2}
\def\stpdsc#1{\llbracket#1\rangle\kern-2.1pt\rangle}
\def\stpdscinv#1{\langle\kern-2.1pt\langle#1\rrbracket}
\def\posemp{{\varepsilon}}
\def\posone{\natone}
\def\postwo{\nattwo}
\def\vposemp{\mathring{\posemp}}
\def\eposone{\overline{\posone}}
\def\vposone{\mathring{\posone}}
\def\epostwo{\overline{\postwo}}
\def\vpostwo{\mathring{\postwo}}
\def\atrm{t}
\def\strmeq{{=}}
\def\trmeq{\relap{\strmeq}}
\def\asym{\aobj}
\def\bsym{\bobj}
\def\aisym{\ip{\asym}}
\def\aSym{{\Sigma}}
\def\bSym{{\Tau}}
\def\srulstr{{\to}}
\def\rulstr{\binap{\srulstr}}
\def\arul{\varrho}
\def\aRul{\Rho}
\def\alhs{{\ell}}
\def\arhs{{r}}
\def\mea#1{\llbracket#1\rrbracket}
\def\smea{\mea{\,}}
\def\sfi{\mathsf{i}}
\def\sfs{\mathsf{s}}
\def\sft{\mathsf{t}}
\def\sinit{\mathsf{M}}
\def\init{\funap{\sinit}}
\def\tinit{\mathsf{N}}
\def\stail{\mathsf{R}}
\definecolor{rood}{rgb}{1,0,0}
\definecolor{groen}{rgb}{0,.8,0}
\definecolor{blauw}{rgb}{0,0,1}
\definecolor{cyaan}{rgb}{0,.8,.8}
\definecolor{magenta}{rgb}{1,0,1}
\definecolor{geel}{rgb}{.9,.7,0}
\definecolor{zwart}{rgb}{0,0,0}
\newcommand{\mycolor}[2]{{\color{#1}#2}}
\def\scolone{\mycolor{blauw}}
\def\scolthr{\mycolor{rood}}
\def\A{\scolthr{A}}
\def\B{\scolone{B}}
\def\sunderline#1#2{\wash{\kern.1em\underline{\phantom{\,#1\,}}}#1#2}
\def\stpsco{\cnvsco}
\title{On Causal Equivalence by Tracing in String Rewriting}
\author{Vincent van Oostrom\thanks{Supported by EPSRC Project EP/R029121/1 Typed lambda-calculi
with sharing and unsharing. Most of this work was performed while employed
at the University of Bath, England.}
\email{oostrom@javakade.nl}
}
\begin{document}
\maketitle

\begin{abstract}
We introduce proof terms for string rewrite systems and, 
using these, show that various notions of equivalence on reductions 
known from the literature can be viewed as different perspectives
on the notion of causal equivalence.
In particular, we show that \emph{permutation equivalence}
classes (as known from the $\lambda$-calculus and term rewriting) 
are uniquely represented both by \emph{trace graphs} (known from 
physics as causal graphs) and by so-called 
\emph{greedy multistep} reductions (as known from algebra).
We present effective maps from the former to the latter,
\emph{topological multi-sorting} $\sfungr$, and vice versa,
the \emph{proof term algebra} $\sfunrg$.
\end{abstract}

\section{Introduction} \label{sec:intro}

We are interested in all aspects of computations 
as modelled by rewrite systems. Here, we are interested
in \emph{finite} computations `doing the same work up 
to the order of the tasks performed'. This can be analysed from 
the perspective of \emph{causality} with the idea that it is 
exactly the causally independent tasks that can be reordered. 
In~\cite[Chapter~8]{Tere:03} we presented five conceptually
distinct ways to mathematically model \emph{causal} 
equivalence~\cite[Section~8.1.3]{Tere:03} of computations
in term rewriting, based on \emph{permutation}, \emph{labelling}, 
\emph{standardisation}, \emph{extraction} and \emph{projection}, 
respectively, and showed them to coincide.

Though coincidence of the above five perspectives 
gave us confidence in having captured 
causal equivalence within term rewriting, at the time we failed
to relate them to the further perspective put forward there based 
on \emph{tracing}~\cite[Section~8.6]{Tere:03}.
The problem\footnote{%
Cf.\ the paragraph on top of page~415 in \cite[Remark~8.6.74]{Tere:03}.}
to do so resides in term rewrite rules that are \emph{non-linear}.
For example, consider a reduction 
$f(a) \to g(a,a) \to g(b,a) \to g(b,c)$ 
in the term rewrite system having rules $a \to b$, $a \to c$ and 
$f(x) \to g(x,x)$, with the last rule non-linear (it \emph{copies} $x$).
The single occurrence of the symbol $a$ in $f(a)$ \emph{traces}\footnote{%
For the \emph{trace} relations presented in~\cite[Section~8.6.1]{Tere:03};
see Definitions~8.6.7,~8.6.17, 
Lemma~8.6.14, and Proposition~8.6.18.}
to both occurrences of $a$ in $g(a,a)$, hence \emph{causes} both subsequent steps.
However, the fact that $a$ was \emph{copied} by the first step 
is \emph{not} captured by tracing; $f$ traces to neither copy of $a$,
though the first step rewriting $f$ is the \emph{cause}
of having the two further steps rewriting the copies of $a$, in the first place.
In this paper we show that only non-linearity
is problematic for tracing. More precisely, we show that for string rewriting,
which is inherently \emph{linear},
causal equivalence does have a simple characterisation based on tracing,
namely by \emph{tragrs}.\footnote{%
Tragr is short for \emph{tra}ce \emph{gr}aph and pronounced as \emph{tracker}, 
with the idea that a tragr \emph{tracks} what happens to symbols.}

In Section~\ref{sec:srs} we adapt the theory of \emph{permutation}
equivalence of finite computations as developed 
in~\cite[Chapter~8]{Tere:03} for term rewriting, to the case 
of string rewriting at hand. 
To represent, interpret, and prove properties about, finite computations 
by algebraic and inductive means, we adapt \emph{proof} 
terms~\cite[Chapter~8]{Tere:03} to represent the finite reductions of a string rewrite system.
In Section~\ref{sec:tragr} we introduce tragrs 
as a formalisation of Wolfram's notion of \emph{causal graph}~\cite{Wolf:20}.
We relate proof terms to tragrs by on the one hand giving 
an algebra $\mea{\,}$ interpreting proof terms as tragrs 
preserving permutation equivalence. 
and on the other hand presenting a topological multi-sorting algorithm $\sfungr$ 
mapping tragrs back to proof terms. 
In Section~\ref{sec:greedy} we show that any proof term may be transformed,
by repeatedly swapping \emph{loath} pairs,\footnote{
A loath pair intuitively is a pair of consecutive steps that are causally independent, 
so that the second step could have been in
parallel to the first, but it was too \emph{loath} to do so.}
into a permutation equivalent 
\emph{greedy}~\cite{Deho:15} multistep reduction, 
and that the latter are in bijective correspondence to tragrs 
via the maps $\sfunrg$ and $\sfungr$. This then allows us to wrap up and conclude that 
tragrs (and greedy multistep reductions) serve as unique representatives
of permutation equivalence classes. 
\begin{remark} \label{rem:klop} 
This short paper was provoked by a remark made to me\footnote{%
While employed as Universit\"atsassistentIn in the
Computational Logic group at the University of Innsbruck.}
in 2020 by Jan Willem Klop, that Wolfram's causal graphs~\cite{Wolf:20}
should characterise permutation equivalence, for string rewriting. 
Having followed Wolfram's \emph{physics} project myself
and having observed that its developments frequently ran 
parallel to those in~\cite[Chapter~8]{Tere:03}, in particular that
there was a close connection between causal graphs and the trace relations 
in~\cite[Section~8.6.1]{Tere:03} discussed above, allowed me to reply immediately, 
confirming Klop's intuition, referring him to~\cite[Chapter~8]{Tere:03} 
and drawing Figure~\ref{fig:tragr}.
At the time I was reluctant to develop that further, as the idea 
was simple and did not solve the problem for the non-linear case 
left open in~\cite{Tere:03}. Later, in 2022,\footnote{%
While employed as Research Associate in the Mathematical Foundations 
of computation group at the University of Bath.} 
I realised that a single picture was too cryptic, and that 
only \emph{because} the results are simple here do we 
entertain hope to extend them to the complex non-linear case.
\end{remark}

\section{Proof terms for string rewriting} \label{sec:srs}

We adapt the theory of permutation equivalence from term 
rewriting~\cite[Chapter~8]{Tere:03} to string rewriting,
guided by that strings can be represented as terms so that 
extant theory for term rewriting can be adapted to string rewriting.
String rewriting affords better properties than term rewriting due to linearity:
whereas term rewrite steps may be \emph{non-linear} as they 
can \emph{replicate}, erase or copy, subterms 
of arbitrary sizes, string rewrite steps cannot do so; they are \emph{linear}.
We moreover forbid left- and right-hand sides of string rewrite rules to be the empty string.
This restriction makes sense from the perspective of causality~\cite{Wins:89}
as it entails all steps being \emph{ex materia} 
(forbidding \emph{ex nihilo} steps) and having \emph{bounded} 
causes; cf.\ the $4^\text{th}$ item of Remark~\ref{rem:empty}.
By imposing these (linearity and non-emptiness) restrictions, we 
are in a sweet spot; the resulting string rewrite systems have \emph{sufficient}
structure to \emph{express} the different perspectives on causal equivalence
mentioned in the introduction, and these perspectives can in turn be proven equivalent in 
a \emph{simple} way due to the absence of replication. 
As in~\cite[Chapter~8]{Tere:03}, to state and prove 
results \emph{proof} terms are our tool of choice
for representing the reductions of a string rewrite system.

The usual definition of the finite strings over an alphabet $\aSym$ as the free 
monoid over $\aSym$ is abstract. To be able to deal with matters of representation,
we instead will be concrete here.
\begin{definition}
  A term rewrite system is \emph{oudenadic} if all rule and function symbols
  have arity $\natzer$, it has a nullary symbol $\stremp$ (empty string) and
  a binary composition symbol $\cdot_h$, and terms are considered modulo $\equiv_M$ induced 
  by the monoid laws, i.e.\ $\trmeq{\stremp\cdot_h\astr}{\astr}$, 
  $\trmeq{\astr\cdot_h\stremp}{\astr}$,
  and $\trmeq{(\astr\cdot_h\bstr)\cdot_h\cstr}{\astr\cdot_h(\bstr\cdot_h\cstr)}$.
\end{definition}
\begin{remark} \label{rem:oudenadic}
  Our terminology \emph{oudenadic} is an attempt to highlight that the representation 
  employed here associates \emph{nullary} function symbols to letters,
  and to contrast it with the usual \emph{monadic} representation,
  which associates a \emph{unary} function symbol to each letter;
  cf.~\cite[Section~3.4.4]{Tere:03} for an account of both.

  In our modelling, strings are closed oudenadic terms over the alphabet
  modulo the monoid laws. Uniquely representing such equivalence classes
  can itself be achieved by term rewriting: orienting the above monoid laws from 
  left to right yields a complete (confluent and terminating) term rewrite system, 
  having as normal forms strings of shape either $\stremp$ or 
  $\aisym{\natone}\ldots\aisym{\anat}$ for some $\natge{\anat}{\natone}$.
\end{remark}
We refer to $\cdot_h$ as \emph{horizontal} composition
  to distinguish it from \emph{vertical} composition $\cdot_v$ 
  below (see Definition~\ref{def:proofterm}), 
based on that we adhere to a convention of drawing strings 
  horizontally and reductions vertically in figures. That meshes 
  well with thinking of strings as being extended in space ($1$-dimensional,
  horizontally) and of reductions as extended in time (vertically);
  cf.\ Figure~\ref{fig:evolution}.
We assume $\cdot_h$ is infix and right-associative and that it is left implicit,
i.e.\ is represented by juxtaposition. To that end, 
we assume $\aSym$ has \emph{unique reading}: if 
$a_1\ldots a_n = b_1\ldots b_m$ for $a_i,b_j \in \aSym$, then
$n = m$ and $a_k = b_k$ for all $1 \leq k \leq n$, cf.~\cite{Viss:11}.
\begin{example} \label{exa:string}
  The alphabet $\aSym \isdefd \setstr{\A,\B}$ has unique reading.
  Per our conventions $\A\B\A\A\B$ abbreviates the term 
  $\A \cdot_h (\B \cdot_h (\A \cdot_h (\A \cdot_h \B)))$,
  which is closed and in normal form with respect to the monoid rules,
  so serves as the unique representative of the string
  (an $\equiv_M$-equivalence class containing, e.g., $(\A\B)(\A\A)\B$).
  
  The alphabet $\bSym \isdefd \setstr{AB,B,A}$ does not have unique
  reading, e.g.\ $ABB$ can be viewed as being composed of  
  the two letters $AB$ and $B$, and alternatively 
  of the three letters $A$, $B$ and $B$.
\end{example}
Concretely, a \emph{string} rewrite system \emph{over} an alphabet $\aSym$ is 
an oudenadic term rewrite system having the letters in $\aSym$ as nullary function symbols,
with sources and targets of rules being nonempty strings (cf.\ the introduction),
and steps taking place modulo $\equiv_M$.
We use $a,b,\ldots$ as variables for letters, $A,B,\ldots$ as concrete letters,
and $\A,\B$ for the concrete letters of our running example, as
in Example~\ref{exa:string}.

We consider term rewrite systems in the sense of~\cite[Chapters~8 and~9]{Tere:03},
meaning that rules themselves will feature as \emph{symbols} whose arity ($\natzer$ for 
the oudenadic systems we consider here) is the number
of variables in the rule, and rules come
equipped with source / target functions mapping them to their lhs / rhs.
This enables expressing reductions, and more generally proofs in rewrite logic~\cite{Mese:92},
as \emph{proof} terms~\cite{Tere:03}, terms over a signature comprising the letters, 
the rules, and a binary composition symbol $\scnvsco_v$ representing the transitivity inference 
rule of rewrite logic~\cite{Mese:92}.
In turn, this allows us to represent the key notion of this paper,
the notion of causal equivalence, as an equivalence on reductions / proof \emph{terms}. 
\begin{definition} \label{def:proofterm}
  Consider for an oudenadic term rewrite system $\strtwo{\aSym}{\aRul}$,
  the extension of the  oudenadic signature for $\aSym$
  by the rules $\arul$ in $\aRul$ as nullary symbols
  and the binary composition symbol $\cdot_v$.  
  \emph{Proof} terms are a subset 
  of the terms over this signature defined inductively, 
  together with \emph{source} and \emph{target} functions $\sstpsrc$ 
  and $\sstptgt$ to strings, in Table~\ref{tab:str}, where
  we use $\acnv \hastype \cnvge{\astr}{\bstr}$ to denote that $\acnv$ is a proof
  term having string $\astr$ as source and string $\bstr$ as target,
  and employ $\acnv,\bcnv,\ccnv,\dcnv,\ldots$ to range over proof terms.
\end{definition}
\begin{table}[htb]
  \[\begin{array}{l@{\quad}r@{\,\hastype\,}r@{\,\scnvge\,}l@{\quad}l}
    \text{(empty)} &
    \stremp & \stremp & \stremp \\
    \text{(letter)} &
    \asym & \asym & \asym & \text{for each letter $\asym$} \\
    \text{(rule)} &
    \arul & \alhs & \arhs & \text{for each rule $\arul \hastype \rulstr{\alhs}{\arhs}$} \\
    \text{(juxtaposition)} &
    \aicnv{\natone}\aicnv{\nattwo} & \aistr{\natone}\aistr{\nattwo} & \bistr{\natone}\bistr{\nattwo} &
    \text{if $\aicnv{\aidx} \hastype \cnvge{\aistr{\aidx}}{\bistr{\aidx}}$} \\
    \text{(transitivity)} &
    \cnvsco{\acnv}{\bcnv} & \astr & \cstr & \text{if $\acnv \hastype \cnvge{\astr}{\bstr}$, 
    and $\bcnv \hastype \cnvge{\bstr}{\cstr}$} \\
  \end{array} \]
  \caption{Proof terms for string rewriting} \label{tab:str}
\end{table}
We abbreviate vertical composition $\cdot_v$ to $\cdot$, 
assume it is right-associative, and that it binds weaker than horizontal 
composition $\cdot_h$ / juxtaposition.
\begin{remark} \label{rem:al}
  By the vertical composition being on \emph{strings} 
  the target of $\acnv$ is only required to be equivalent modulo
  the monoid laws to the source of $\bcnv$ in $\text{(transitivity)}$.
  We have $\atrm \hastype \cnvge{\atrm}{\atrm}$ for every oudenadic term $\atrm$.
\end{remark}
The name proof term for such terms is justified by that they can
be viewed as a \emph{proof} that their target string is \emph{reachable} 
from their source string by using the rewrite rules.
Building on Example~\ref{exa:string}, we take the following
as a running example to illustrate concepts and results.
\begin{example} \label{exa:srs} \label{exa:proofterm}
  Let $\strtwo{\aSym}{\aRul}$ be the string rewrite system having rules
  $\aRul \isdefd \setstr{\alpha \hastype \rulstr{\B\B}{\A}, \beta \hastype \rulstr{\A\A\B}{\B\A\A\B}}$.
  The proof term
  $\acnv \isdefd  \A\B\beta \cdot
   \A\alpha \A\A\B \cdot
   \A\A\beta \cdot
   \beta \A\A\B \cdot
   \B\beta \A\A\B \cdot
   \alpha \A\A\B\A\A\B \cdot
   \A\beta \A\A\B$ proves the \emph{reachability} statement 
   $\cnvge{\A\B\A\A\B}{\A\B\A\A\B\A\A\B}$.
  An alternative witness to that 
  statement is the proof term
  $\acnv' \isdefd \A\B\beta \cdot
 \A\alpha\beta \cdot
 \beta \A\A\B \cdot
 \B\beta \A\A\B \cdot
 \alpha\beta \A\A\B$. 
 For the (vertical) compositions in these proof terms to be well-defined it
 is essential to work modulo the monoid laws. For instance,
 although the target $(\B\A\A\B)\A\A\B$ of $\beta \A\A\B$ and the source $\B(\A\A\B)\A\A\B$
 of $\B\beta \A\A\B$ are distinct as oudenadic terms,
 they are both represented by the string $\B\A\A\B\A\A\B$,
 allowing their vertical composition in $\acnv$.
\end{example}
Although in the example the proof terms $\acnv$ and $\acnv'$ intuitively 
do `the same amount of work', the latter is shorter than the former.
This is due to that the former is maximally
sequentialised, performing one step at the time, whereas the
latter is maximally concurrent, performing steps as soon as possible
as concurrency permits.%
\begin{definition} \label{def:multistep}
  A \emph{multistep} is a proof term without vertical compositions.
  It is \emph{empty} / a (\emph{single}) \emph{step} if it has no / one occurrence of a rule.
  A \emph{reduction} (a \emph{multistep reduction}) either is an empty multistep
  or a vertical composition, associated to the right, of steps (resp.\ nonempty multisteps).
  \emph{Permutation} equivalence $\scnveqv$ between proof terms 
  is induced by the laws in Table~\ref{tab:str:rl},
  where the sides of the laws are restricted to proof terms,
  i.e.\ sources and targets of the proof terms $\acnv,\bcnv,\ccnv,\dcnv$ 
  and the oudenadic terms $\astr,\bstr$ are assumed to match appropriately. 
\end{definition}%
\begin{table}
  \[\begin{array}{l@{\quad}r@{\,=\,}l@{\quad}l@{\quad}r@{\,=\,}l}
   \text{(h-left unit)} &
    \stremp\acnv & \acnv 
    & \text{(v-left unit)} & \cnvsco{\astr}{\acnv} & \acnv \\
    \text{(h-right unit)} &
    \acnv\stremp & \acnv  
    & \text{(v-right unit)} & \cnvsco{\acnv}{\bstr} & \acnv \\
    \text{(h-associativity)} &
    (\acnv\bcnv)\ccnv & \acnv(\bcnv\ccnv) 
    & \text{(v-associativity)} &  \cnvsco{(\cnvsco{\acnv}{\bcnv})}{\ccnv} & \cnvsco{\acnv}{(\cnvsco{\bcnv}{\ccnv})} \\
    \text{(exchange)} &  \cnvsco{\acnv\bcnv}{\ccnv\dcnv} & (\cnvsco{\acnv}{\ccnv})(\cnvsco{\bcnv}{\dcnv})
  \end{array} \] 
  \caption{Laws inducing permutation equivalence} \label{tab:str:rl}
\end{table}%
We use $\aStp,\bStp,\cStp,\ldots$ to range over multisteps, and $\astp,\bstp,\cstp,\ldots$ to range over steps. Observe that the source / target of the 
left- and right-hand side of each 
law in Table~\ref{tab:str:rl} are the same (as strings).
\begin{remark} \label{rem:match}
  The requirements on the sources and targets in the laws of Table~\ref{tab:str:rl}
  boil down to working in a \emph{typed} algebraic structure~\cite{Pous:10}.
  For instance, in that setting a \emph{category} is a typed \emph{monoid},
  allowing composition of morphisms only if their sources and targets match.
  In Table~\ref{tab:str:rl} the requirements are most prominent in the (exchange) law. 
  For example, for rules 
  $\alpha  \hastype \arsa{A}{A'C},
   \alpha' \hastype \arsa{A'}{A''},
   \beta   \hastype \arsa{B}{B'},
   \beta'  \hastype \arsa{CB'}{B''}$,
  though we do have $\cnvsco{\alpha\beta}{\alpha'\beta'} \hastype \cnvge{AB}{A''B''}$,
  the expression $(\cnvsco{\alpha}{\alpha'})(\cnvsco{\beta}{\beta'})$ is
  not even a proof term since, e.g., the target $A'C$ of $\alpha$ does
  not match the source $A'$ of $\alpha'$.
\end{remark}
\begin{remark}
  Our reductions, as proof terms of a specific shape, 
  are formally distinct from the classical notion of a reduction,
  as a finite sequence of steps, in rewriting~\cite{Baad:Nipk:98,Tere:03}.
  However, since there is an obvious bijection between 
  both we feel the confusion is acceptable.
  For instance, the proof term 
   $\acnv \isdefd  \A\B\beta \cdot
   \A\alpha \A\A\B \cdot
   \A\A\beta \cdot
   \beta \A\A\B \cdot
   \B\beta \A\A\B \cdot
   \alpha \A\A\B\A\A\B \cdot
   \A\beta \A\A\B \hastype 
   \cnvge{\A\B\A\A\B}{\A\B\A\A\B\A\A\B}$ corresponds to:
   \[\arsa{\A\B\underline{\A\A\B}
}{\arsa{\A\underline{\B\B}\A\A\B
}{\arsa{\A\A\underline{\A\A\B}
}{\arsa{\underline{\A\A\B}\A\A\B
}{\arsa{\B\underline{\A\A\B}\A\A\B
}{\arsa{\underline{\B\B}\A\A\B\A\A\B
}{\arsa{\A\underline{\A\A\B}\A\A\B
      }{\A\B\A\A\B\A\A\B
      }}}}}}}\]
   Similarly, the proof term 
   $\acnv' \isdefd \A\B\beta \cdot
 \A\alpha\beta \cdot
 \beta \A\A\B \cdot
 \B\beta \A\A\B \cdot
 \alpha\beta \A\A\B$ corresponds to the following sequence of multisteps, 
 where we employ the notation $\aarsdev$ of~\cite[Chapter~8]{Tere:03} for multisteps:
 \[
  \arsdeva{\A\B\underline{\A\A\B} 
}{\arsdeva{\A\sunderline{\B}{\B}\sunderline{\A\A}{\B}
}{\arsdeva{\underline{\A\A\B}\A\A\B
}{\arsdeva{\B\underline{\A\A\B}\A\A\B
}{\arsdeva{\sunderline{\B}{\B}\sunderline{\A\A}{\B}\A\A\B
      }{\A\B\A\A\B\A\A\B
      }}}}} 
 \]
\end{remark}
Logicality, cf.~\cite{Oost:04}, of reductions expresses 
that if a reachability statement holds then it is provable by a 
\emph{reduction} that is permutation equivalent to the original proof term. 
\begin{lemma}[Logicality] \label{lem:logicality}
  If $\acnv \hastype \cnvge{\astr}{\bstr}$ for some proof 
  term $\acnv$, then there is a reduction $\acnv' \hastype \cnvge{\astr}{\bstr}$
  with $\cnveqv{\acnv}{\acnv'}$.
\end{lemma}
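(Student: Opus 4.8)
The plan is to argue by induction on the structure of the proof term $\acnv$, following the inductive definition in Table~\ref{tab:str}, after first establishing two closure properties of the class of reductions under $\cnveqv$. The base cases are immediate: in clauses (empty) and (letter), $\acnv$ is already an empty multistep, hence a reduction, so $\acnv' \isdefd \acnv$ works; in clause (rule), $\acnv = \arul$ is a single step, hence a reduction, and again $\acnv' \isdefd \acnv$. Throughout I use that $\cnveqv$ is a congruence on proof terms (it is the congruence generated by the laws of Table~\ref{tab:str:rl}) and that both sides of each such law, as well as $\cnveqv$ itself, preserve source and target as strings, so that the $\acnv'$ produced automatically satisfies $\acnv' \hastype \cnvge{\astr}{\bstr}$.

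\emph{First auxiliary claim:} the vertical composition $\cnvsco{\bcnv}{\ccnv}$ of two reductions is $\cnveqv$ to a reduction. If either factor is an empty multistep, it is an identity proof term on a string and is deleted by (v-left unit) or (v-right unit), leaving the other factor. If both factors are nonempty right-associated vertical compositions of steps, then (v-associativity) reassociates $\cnvsco{\bcnv}{\ccnv}$ into a single right-associated vertical composition of steps.

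\emph{Second, and central, auxiliary claim:} the horizontal composition (juxtaposition) $\bcnv\ccnv$ of two reductions $\bcnv \hastype \cnvge{\astr}{\bstr}$ and $\ccnv$ is $\cnveqv$ to a reduction. The idea is to sequentialise: pad $\bcnv$ to $\cnvsco{\bcnv}{\stptgt{\bcnv}}$ by (v-right unit) and $\ccnv$ to $\cnvsco{\stpsrc{\ccnv}}{\ccnv}$ by (v-left unit) — where $\stptgt{\bcnv}$ and $\stpsrc{\ccnv}$ denote the corresponding strings read as identity proof terms — after which (exchange) applies, its source/target side conditions holding precisely because the padding terms are identities, and yields $\bcnv\ccnv \cnveqv \cnvsco{(\bcnv\,\stpsrc{\ccnv})}{(\stptgt{\bcnv}\,\ccnv)}$. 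Next, $\bcnv\,\stpsrc{\ccnv}$ is shown $\cnveqv$ to a reduction by induction on the number of steps in $\bcnv$: if $\bcnv$ is empty then $\bcnv\,\stpsrc{\ccnv}$ is a string, and if $\bcnv = \cnvsco{\astp}{\dcnv}$ then, padding $\stpsrc{\ccnv}$ to $\cnvsco{\stpsrc{\ccnv}}{\stpsrc{\ccnv}}$ and applying (exchange) once more, $\bcnv\,\stpsrc{\ccnv} \cnveqv \cnvsco{(\astp\,\stpsrc{\ccnv})}{(\dcnv\,\stpsrc{\ccnv})}$, where $\astp\,\stpsrc{\ccnv}$ is a multistep with a single rule, hence a step, and $\dcnv\,\stpsrc{\ccnv}$ is $\cnveqv$ to a reduction by the inner induction hypothesis. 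Symmetrically $\stptgt{\bcnv}\,\ccnv$ is $\cnveqv$ to a reduction, and the first auxiliary claim appends the two. (Empty multisteps occurring as factors of a juxtaposition are absorbed beforehand by (h-left unit) / (h-right unit).)

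With the two claims in place the induction step is routine. For clause (juxtaposition), $\acnv = \aicnv{\natone}\aicnv{\nattwo}$, the induction hypothesis gives reductions $\acnv'_{\natone} \cnveqv \aicnv{\natone}$ and $\acnv'_{\nattwo} \cnveqv \aicnv{\nattwo}$, so by congruence $\acnv \cnveqv \acnv'_{\natone}\acnv'_{\nattwo}$, which is $\cnveqv$ to a reduction by the second auxiliary claim. For clause (transitivity), $\acnv = \cnvsco{\bcnv}{\ccnv}$, the induction hypothesis gives reductions $\bcnv' \cnveqv \bcnv$ and $\ccnv' \cnveqv \ccnv$, so by congruence $\acnv \cnveqv \cnvsco{\bcnv'}{\ccnv'}$, which is $\cnveqv$ to a reduction by the first auxiliary claim. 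I expect the main obstacle to be the second auxiliary claim: every use of (exchange) must respect its source/target side conditions (cf.\ Remark~\ref{rem:match}), which is exactly why the identity-padding via the unit laws has to be inserted before each application.
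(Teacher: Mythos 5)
Your proposal is correct and follows essentially the same route as the paper: identical base cases, the juxtaposition case handled by sequentialising $\aicnv{\natone}\aicnv{\nattwo}$ into $\cnvsco{(\aicnv{\natone}'\aistr{\nattwo})}{(\bistr{\natone}\aicnv{\nattwo}')}$ via unit-padding and (exchange), and the transitivity case by vertical composition plus reassociation and elision of empty factors. Your two auxiliary claims merely make explicit what the paper compresses into ``by using (exchange) and vertical units repeatedly'' and its step-suffixing notation $\ccnv\langle\cstr\rangle$.
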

\begin{proof}
  By induction and cases on $\acnv$.
  \begin{list}{}{}
  \item[(empty)] 
    the empty string $\stremp$ is an empty reduction;
  \item[(letter)]
    a single letter $\asym$ is an empty reduction;
  \item[(rule)] 
    a single rule $\arul$ is a single step reduction from its lhs to its rhs;
  \item[(juxtaposition)] 
    suppose we have a proof term $\acnv \isdefd \aicnv{\natone}\aicnv{\nattwo}
    \hastype \cnvge{\aistr{\natone}\aistr{\nattwo}}{\bistr{\natone}\bistr{\nattwo}}$
    with $\aicnv{\aidx} \hastype \cnvge{\aistr{\aidx}}{\bistr{\aidx}}$.
    By the IH we have reductions 
    $\aicnv{\aidx}' \hastype \cnvge{\aistr{\aidx}}{\bistr{\aidx}}$
    with $\cnveqv{\aicnv{\aidx}}{\aicnv{\aidx}'}$.
    Set $\acnv'$ to
    $\cnvsco{\aicnv{\natone}'\langle\aistr{\nattwo}\rangle
           }{\langle\bistr{\natone}\rangle\aicnv{\nattwo}'
           }$, where for a reduction $\ccnv$ and string $\cstr$,
    $\ccnv\langle\cstr\rangle$ denotes the reduction obtained 
    by suffixing each step of $\ccnv$ by $\cstr$, and 
    symmetrically for $\langle\cstr\rangle\ccnv$.
    One easily verifies 
    $\acnv' \hastype \cnvge{\aistr{\natone}\aistr{\nattwo}}{\bistr{\natone}\bistr{\nattwo}}$,
    and also that $\cnveqv{\acnv}{\acnv'}$ by using (exchange) and vertical units repeatedly.
    Then by repeated vertical associativity applied to $\acnv'$ we obtain
    a reduction, except in case one or both of the $\aicnv{\aidx}'$
    is the empty reduction in which case we conclude by eliding one such by a horizontal unit.
  \item[(transitivity)]
    by vertically composing the reductions obtained by the IH
    for the constituent proof terms, possibly followed by associating to right and
    eliding empty reductions as before. \qedhere
  \end{list}
\end{proof}
The proof is effective, transforming proof
terms into reductions witnessing the same reachability.
\begin{example} \label{exa:perm}
  The procedure underlying the proof of Lemma~\ref{lem:logicality}
  transforms the proof term (in fact a multistep reduction) $\acnv'$ of 
  Example~\ref{exa:proofterm} into the reduction $\acnv$.
  To see this it suffices, since vertical compositions transform 
  homomorphically, to note that the 
  multisteps $\A\alpha\beta$ and $\alpha\beta \A\A\B$ in $\acnv'$ are transformed into
  the (two-step) reductions $\cnvsco{\A\alpha \A\A\B}{\A\A\beta}$ and
  $\cnvsco{\alpha \A\A\B\A\A\B}{\A\beta \A\A\B}$ in $\acnv$, respectively.  
\end{example}
\begin{remark} \label{rem:monadic}
  Logicality is the raison d'\^etre for the field of rewriting~\cite{Baad:Nipk:98,Tere:03},
  allowing to reduce reachability to reducibility.
  Cf. \cite[Lemma~3.6]{Mese:92} for the corresponding logicality result
  for term rewriting. 
\end{remark}  
Although the logicality lemma allows
to represent any proof term by a reduction, the latter is in 
general far from unique (up to permutation equivalence).
For instance, in Example~\ref{exa:perm} we could have chosen to 
transform the multistep $\A\alpha\beta$ in $\acnv'$ into the two step 
reduction $\cnvsco{\A\B\B\beta}{\A\alpha \B\A\A\B}$ instead,
giving rise to a reduction permutation equivalent to $\acnv'$ but distinct from $\acnv$.
Intuitively this is caused by that factorising a proof term into a sequence 
of steps forces to order steps in \emph{some} (arbitrary) way 
even though they may be causally independent.
For instance, $\alpha$ and $\beta$ in the multistep $\A\alpha\beta$ are concurrent /
causally independent, but still must be ordered to 
obtain a reduction; both orders will do.
Such a representation favours sequentiality over concurrency and length over width, so to speak.
In the next sections we will go in the opposite 
direction, maximally favouring concurrency over sequentiality and width over length.

From that perspective, the proof term 
$\acnv \isdefd \A\B\beta \cdot
   \A\alpha \A\A\B \cdot
   \A\A\beta \cdot
   \beta \A\A\B \cdot
   \B\beta \A\A\B \cdot
   \alpha \A\A\B\A\A\B \cdot
   \A\beta \A\A\B$ is a proof of the reachability statement 
   $\cnvge{\A\B\A\A\B}{\A\B\A\A\B\A\A\B}$ that is wasteful in two ways:
\begin{list}{}{}
\item[(too long)] This can be remedied by proceeding greedily~\cite{Deho:15},
  employing proper \emph{multi}steps instead of steps.
  For instance, the $2^\text{nd}$ and $3^\text{rd}$ steps 
  $ \A\alpha \A\A\B \cdot \A\A\beta \hastype \sscnvgea{\A\B\B\A\A\B}{\A\A\B\A\A\B}$
  in $\acnv$ can be combined into the single multistep
  $\A\alpha\beta \hastype \sscnvgea{\A\B\B\A\A\B}{\A\A\B\A\A\B}$.
  Proceeding greedily, combining as many of the single
  steps into multisteps as possible, and as early as possible,
  turns $\acnv$ into the shorter \emph{greedy} multistep reduction $\acnv' \isdefd
  \A\B\beta \cdot
  \A\alpha\beta \cdot
  \beta \A\A\B \cdot
  \B\beta \A\A\B \cdot
  \alpha\beta \A\A\B$.
  As we will show, greedy multistep reductions may serve as \emph{unique}
  representatives of permutation equivalence classes.
\item[(too large)]
  (Multi)steps not only represent what changes (via the rules in it) but also 
  what \emph{does not change} (via the letters in it); cf.\ the frame problem~\cite{Shan:16}.
  As a consequence, in general proof terms predominantly consist of letters;
  this holds true in particular both for $\acnv$ and $\acnv'$.
  \emph{Causal graphs}~\cite{Wolf:02} (cf.\ Figure~\ref{fig:tragr} left) remedy this
  by eliding letters, only keeping the causal dependencies between rule symbols.
  This suffices, as we will show, to let causal graphs serve as \emph{unique}
  representatives of permutation equivalence classes.
\end{list}
To express and relate both remedies we will employ a bit of \emph{residual} theory 
(going back to~\cite{Chur:Ross:36}) for multisteps below. 
To avoid things becoming too heavy for 
this short paper, we only develop the residual theory necessary 
here and in an ad hoc informal fashion, referring the reader 
to Chapter~8 of~\cite{Tere:03} in general and to Section~8.7
in particular, for background on (from the perspective
of permutation equivalence) and a formal treatment of, residuation.
\begin{definition} \label{def:res}
  For multisteps $\aStp$, $\bStp$ having the same source,
  we write $\setle{\aStp}{\bStp}$ to denote that $\aStp$ is \emph{contained} in $\bStp$,
  meaning that $\aStp$ is obtained from $\bStp$ by mapping \emph{some} occurrences of 
  rule symbols to their source.
  In that case, we denote by $\resa{\bStp}{\aStp}$ the \emph{residual} of $\bStp$ \emph{after} $\aStp$,
  that is, the multistep obtained from $\bStp$ by mapping the \emph{other} occurrences of rules 
  (the \emph{complement} of those selected for $\setle{\aStp}{\bStp}$) to their target.
\end{definition}
\begin{example}
  $\A\B\B\A\A\B$, $\A\B\B\beta$, $\A\alpha \A\A\B$ and $\A\alpha \beta$ are
  the four multisteps contained in $\A\alpha\beta$ in Example~\ref{exa:proofterm}.
  We have, e.g., $\trmeq{\resa{\A\alpha\beta}{\A\B\B\beta}}{\A\alpha \B\A\A\B}$ and
  $\trmeq{\resa{\A\alpha\beta}{\A\alpha \A\A\B}}{\A\A\beta}$.
  Observe that if $\setle{\aStp}{\bStp}$ and $\aStp$ is nonempty, then 
  fewer rule symbols occur in $\resa{\bStp}{\aStp}$ than in $\bStp$ by linearity
  of string rewriting.
\end{example}

\section{Trace graphs by proof term algebra} \label{sec:tragr}

We give a proof term algebra $\mea{\,}$ into tragrs,
trace graphs, based on causal graphs~\cite{Wolf:02}.
The algebra is shown to model permutation equivalence in that
it maps permutation equivalent proof terms to the same tragr.
We give a procedure we dub topological multi-sorting,
reading back a proof term from a tragr.

Before giving a formal treatment, we first
give some underlying intuitions by means of an example 
that links to the intermediate informal 
notion of an \emph{evolution}~\cite{Wolf:02},
and to our discussion above.%
\begin{figure}
\def\afig{$\A\B\A\A\B$}
\def\bfig{$\A\B\B\A\A\B$}
\def\cfig{$\A\A\A\A\B$}
\def\dfig{$\A\A\B\A\A\B$}
\def\efig{$\B\A\A\B\A\A\B$}
\def\ffig{$\B\B\A\A\B\A\A\B$}
\def\gfig{$\A\A\A\B\A\A\B$}
\def\hfig{$\A\B\A\A\B\A\A\B$}
\def\abfig{$\A\B\beta$}
\def\bcfig{$\A\alpha\A\A\B$}
\def\cdfig{$\A\A\beta$}
\def\defig{$\beta\A\A\B$}
\def\effig{$\B\beta\A\A\B$}
\def\fgfig{$\alpha\A\A\B\A\A\B$}
\def\ghfig{$\A\beta\A\A\B$}
\def\ifig{$\B\B$}
\def\jfig{$\A$}
\def\kfig{$\A\A\B$}
\def\lfig{$\B\A\A\B$}
\def\ijfig{$\alpha$}
\def\klfig{$\beta$}
 \begin{center}
\scalebox{1.58}{\includegraphics{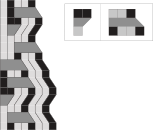}}\quad\quad\quad\quad%
\scalebox{.57}{\input{evolutionwolf.txt}}
  \end{center}
  \caption{Reduction $\acnv \hastype \cnvge{\A\B\A\A\B}{\A\B\A\A\B\A\A\B}$ (right)
  and its evolution (left)} \label{fig:evolution}
\end{figure}%
\begin{example} \label{exa:evolution}%
\def\rulea{\raisebox{-.12ex}{\scalebox{.41}{\includegraphics{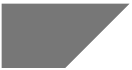}}}}%
\def\ruleb{\raisebox{-.12ex}{\scalebox{.41}{\includegraphics{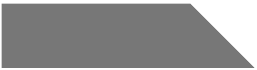}}}}%
\def\abruleb{\raisebox{-.12ex}{\scalebox{.39}{\includegraphics{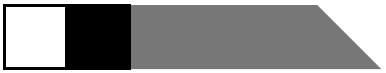}}}}%
\def\ql{\kern-.15em}%
  The reduction $\acnv$ of Example~\ref{exa:proofterm} can be depicted
  as the evolution on the left of in Fig.~\ref{fig:evolution} 
  (taken from~\cite{Rowl:Weis}; based on~\cite[fig.~a, p.498]{Wolf:02}).
  To that end, we interpret steps as rows of (possibly skewed) blocks 
  obtained by $\A \mapsto \Box$, $\B \mapsto \blacksquare$, $\alpha \mapsto \rulea$, and $\beta \mapsto \ruleb$.
  Vertical compositions of steps are interpreted by stacking the rows of the interpretations
  of the steps on top of each other, interspersed with the evaluations 
  of their sources and targets.
  For instance, the top three rows are the evaluations 
  $\Box\ql\blacksquare\ql\Box\ql\Box\ql\blacksquare$,
  $\abruleb$, and
  $\Box\ql\blacksquare\ql\blacksquare\ql\Box\ql\Box\ql\blacksquare$
  of the source, step, and target of $\A\B\beta \hastype \sscnvgea{\A\B\A\A\B}{\A\B\B\A\A\B}$.
  (The interpretations of the rule symbols $\alpha,\beta$ are given next to the evolution).
\end{example}
Looking at Figure~\ref{fig:evolution} the correspondence
between evolutions and reductions is clear though informal.
Evolutions nicely illustrate the point argued above in (too large) that
letters (the white and black boxes representing $\A$ and $\B$) add nothing
to the representation; the source and target strings and the causal dependencies 
(represented by directed edges) between the rule symbols would suffice to 
read back the multistep
reduction $\acnv'$ (permutation equivalent to $\acnv$) from the evolution. That idea 
will be formalised below using the notion of \emph{tragr}, 
short for \emph{trace graph}, illustrated for $\acnv$ / $\acnv'$ in Figure~\ref{fig:tragr}.
\begin{figure}
\def\afig{$\sfs$}
\def\bfig{$\sfi$}
\def\cfig{$\sft$}
\def\dfig{$\stremp$}
\begin{center}\scalebox{.65}{\includegraphics{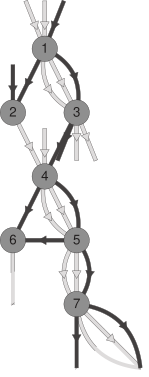}\quad\quad\quad\quad\quad\quad\quad\quad\scalebox{.47}{\input{tragr2.txt}}}%
\end{center}
  \caption{Causal graph (left) and tragr 
  from $\A\B\A\A\B$ to $\A\B\A\A\B\A\A\B$ (right)} \label{fig:tragr}
\end{figure}
\begin{remark}
  The book~\cite{Wolf:02} being intended for a general audience,
  causal graphs are not sufficiently formalised there to state our results here;
  in particular, causal graphs lack what we call below an interface 
  (dags of the source and target strings).
  Tragrs are our way to overcome that deficiency.
  We believe that if Wolfram were to formalise his notion
  of causal graph, he would end up with something similar to tragrs.
\end{remark}
\begin{definition} \label{def:tragr}
  Given a string rewrite system $(\aSym,\aRul)$, a 
  \emph{tragr from string $\astr$ to string $\bstr$}
  is a port graph~\cite{Stew:02} (see also~\cite{Bawd:86,Lafo:90,Stew:02,Oost:Looi:Zwit:04}).
  comprising the following three parts, as visualised in:
\begin{center}%
\scalebox{.33}{\def\efig{$\stremp$}\input{tracegraph.txt}}
\end{center}
  \begin{itemize}
  \item
    the dag of source string $\astr$
    having for every occurrence of a letter $\bsym$ in $\astr$ a node labelled $\bsym$,
    having (in clockwise order) an input port of type\footnote{%
Types serve here only to enable indicating / visualising
connections between ports and edges conveniently (cf.\ Example~\ref{exa:tragr}).}
    $\ast$, an output port of type $\ast$,
    and an output port of type $\bsym$.
    The nodes are connected in a straight line by edges of type $\ast$,
    terminated by a node labelled $\stremp$ having an input port of type $\ast$,
    and an output port of type $\stremp$.
  \item
    a dag, the \emph{causal} graph, of nodes 
    labelled by rule symbols $\arul$ having (in clockwise order) 
    as input ports the letters of the source string of $\arul$ and as output
    ports the letters of (the reverse of) the target string, with
    each port having the type of its letter;
  \item
    the dag of target string $\bstr$,
    as for the source string but in reverse direction, i.e.\ 
    with input and output port of type $\ast$ swapped.
  \end{itemize}
  The tragr is required to be a planar dag,
  to only have edges from input to
  output ports of the same type, and to have exactly two ports without edges, 
  both of type $\ast$: the first input port of the source string and the first output 
  port of the target string. (See Remark~\ref{rem:planar} for more on the planarity requirement.)  
\end{definition}
We indicate \emph{the} input / output ports of a tragr by dangling edges, and 
refer to the dags of the source and target strings combined as its \emph{interface}.
\begin{example} \label{exa:tragr}
  The graph on the right in Figure~\ref{fig:tragr} is a tragr
  with the types of edges being indicated by color.
\end{example}
\begin{remark} 
  Tragrs are not (too large) in the sense discussed above; letters
    only feature in the interface but not in the causal graph of a 
    tragr; cf.\ the text below~\cite[Def.~8.6.17]{Tere:03}.
\end{remark}
\begin{definition} \label{def:tragr:alg}
  For a string rewrite system $(\aSym,\aRul)$ the 
  proof term algebra $\mea{\,}$ on tragrs,
  interpreting each proof term $\acnv \hastype \cnvge{\astr}{\bstr}$ 
  as a tragr $\mea{\acnv}$ from $\astr$ to $\bstr$,
  is given by:
\def\afig{$\asym$}%
\def\bfig{$\Rightarrow$}%
\def\efig{$\stremp$}%
\def\sfig{$\sstpsrc$}%
\def\tfig{$\sstptgt$}%
\def\pfig{$1,\text{i}$}%
\def\qfig{$2,\text{i}$}%
\def\rfig{$1,\text{o}$}%
\begin{list}{}{}
  \item[(letter and empty)]
    $\mea{a}$ and $\mea{\stremp}$ are the tragrs:
    \begin{center}
    \scalebox{.36}{\input{alphabetagraphsymbol.txt}}
    \end{center}
  \item[(rule)]
    $\mea{\arul}$ is 
    a tragr having the straight line dags for its source and target as interface,
    comprising a single rule node connected to the interface in an orderly way,
    illustrated for rules $\alpha$ and $\beta$ by:
    \begin{center}
      \scalebox{.36}{\input{alphabetagraphrule.txt}}
    \end{center}
  \item[(juxtaposition)]
    $\mea{\acnv\bcnv}$ is obtained from $\mea{\acnv}$ and $\mea{\bcnv}$ 
    by removing the $\stremp$s from the former, and redirecting
    the input and output of the latter accordingly:
    \begin{center}
\scalebox{.36}{\input{horizontal.txt}\quad\quad\quad\input{horizontal3.txt}}
    \end{center}
  \item[(transitivity)]
    The tragr $\mea{\cnvsco{\acnv}{\bcnv}}$ is obtained from $\mea{\acnv}$ and $\mea{\bcnv}$
    by connecting the output of the former to the input 
    of the latter, and subsequently eliding the intermediate interface:
    \begin{center}
\scalebox{.36}{\input{vertical.txt}\quad\quad\quad\quad\input{vertical1.txt}\quad\quad\quad\quad\input{vertical3.txt}}
    \end{center}
    where elision from the middle to the right is achieved by normalising with respect to the rules:
    \begin{center}
\scalebox{.36}{\input{vertical2.txt}}
    \end{center}
  \end{list}
\end{definition}
Observe that if $\acnv \hastype \cnvge{\astr}{\bstr}$ 
then $\mea{\acnv}$ indeed is a tragr from $\astr$ to $\bstr$.
\begin{example} \label{exa:blagr}
  The tragrs $\mea{\acnv}$ and $\mea{\acnv'}$ 
  of the permutation equivalent $\acnv,\acnv'$ 
  are as on the right in Figure~\ref{fig:tragr}.
\end{example}
\begin{remark} \label{rem:empty}
  \begin{itemize}
  \item
  Elision $\aArs$ is complete: terminating because the number of nodes decreases in each step and
  confluent because elision can be viewed as an interaction net rule~\cite{Lafo:90}.
  \item
$\mea{\acnv}$ is finite so that
all maximal paths in it lead from its input to its output,
using that nodes have at least one input / output port,
by the assumption that left- and right-hand sides are non-empty.
  \item
  We modelled trace graphs, tragrs, after the \emph{trace relations} 
  of~\cite[Definition~8.6.17 / Figure~8.37]{Tere:03} with the main difference between both being that 
  the latter do not allow \emph{parallel} edges between the same two nodes.
  That makes the latter unsuitable for
  our purposes here; only knowing \emph{that} a rule
  causally depends on another not \emph{how}, is in general
  not sufficient to read back proof terms.
  For instance, for rules $\rulstr{A}{BBB}$, $\rulstr{BB}{B}$ and $\rulstr{BB}{C}$,
  the reductions $\arsa{A}{\arsa{\underline{BB}B}{\arsa{BB}{C}}}$ and
  $\arsa{A}{\arsa{B\underline{BB}}{\arsa{BB}{C}}}$ induce the same trace relation,
  despite not being permutation equivalent.\footnote{%
It is interesting to compute their respective tragrs and see that / how they differ.%
}
  \item 
  If we were to allow right-hand sides to be empty as in $\alpha \hastype \rulstr{A}{\stremp}$, 
  then the number of occurrences of $\alpha$ that may cause the left-hand side of 
  another rule may be unbounded as illustrated by the multisteps of shape 
  $B\alpha^nB \hastype \cnvge{B\underline{A^n}B}{BB}$   
  for rule $\beta \hastype \rulstr{BB}{\ldots}$ and any $n \in \Nat$,
  despite that none of the $A$s trace to the rule $\beta$ (its left-hand side $BB$).\footnote{%
Cf.\ \cite[Definition~8.6.64]{Tere:03} for a hack to overcome (by reifying `emptyiness') 
the problem caused by such \emph{collapsing} rules.}
  Swapping left- and right-hand sides in this example illustrates the
  problem with allowing left-hand sides to be empty.  
  \item
  The algebra $\mea{\,}$ illustrates that 
  horizontal and vertical composition are closely
  related to parallel and series composition of graphs.
  \end{itemize}
\end{remark}
We show that $\mea{\,}$ maps permutation 
equivalent proof terms to the same tragr,\footnote{%
Stated differently, we show the proof term algebra $\smea$
is a \emph{model} (in an appropriate typed sense of~\cite[Definition~2.5.1(\romannr{5})]{Tere:03}) 
of permutation equivalence given in Table~\ref{tab:str:rl}.
The proof for trace graphs follows that 
for trace relations~\cite[Lemma~8.6.14]{Tere:03}.}
see Example~\ref{exa:blagr},
but defer showing the converse to the next section
(see Theorems~\ref{thm:bi} and~\ref{thm:cbe}).
\begin{lemma} \label{lem:mea:equiv} \label{lem:causal}
  $\smea$ maps permutation equivalent  
  proof terms to the same\footnote{%
Formally, the same up to graph isomorphism.
} tragr.
\end{lemma}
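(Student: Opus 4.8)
The plan is to use that permutation equivalence $\scnveqv$ is, by Definition~\ref{def:multistep}, the least congruence on proof terms containing all (suitably typed) instances of the seven laws of Table~\ref{tab:str:rl}, and that $\smea$ is defined by recursion on the structure of proof terms (Definition~\ref{def:tragr:alg}). So the proof reduces to two things. First (congruence compatibility): the two tragr operations used by $\smea$ — juxtaposition (parallel composition of the underlying port graphs, deleting the $\stremp$-nodes of the left operand and reconnecting its dangling $\ast$-ports to the right operand) and transitivity (series composition along the shared interface, followed by elision) — are well defined on, and monotone with respect to, graph isomorphism, so that replacing a subterm $\acnv$ of a proof term by some $\bcnv$ with $\mea{\acnv}\cong\mea{\bcnv}$ yields an isomorphic tragr. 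Second (validity of each law): for every instance $\text{(lhs)}=\text{(rhs)}$ of a law in Table~\ref{tab:str:rl} we have $\mea{\text{(lhs)}}\cong\mea{\text{(rhs)}}$. Granting these, a routine induction on the derivation of $\acnv\cnveqv\bcnv$ — with reflexivity, symmetry and transitivity handled by the corresponding properties of $\cong$ — gives $\mea{\acnv}\cong\mea{\bcnv}$, which is the claim (up to isomorphism, as the footnote records).

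For congruence compatibility I would observe that each operation is built from disjoint union of the underlying port graphs together with a gluing/redirection that depends only on the source and target strings of the operands (their interfaces, i.e.\ the straight-line dags of those strings). Since every law of Table~\ref{tab:str:rl} preserves source and target as strings (as remarked right after the table), the operands on the two sides have matching interfaces, and any isomorphism between two tragrs extends canonically to one between their juxtapositions (resp.\ series composites, resp.\ elided normal forms) with a fixed tragr; this is bookkeeping about port graphs. The unit and associativity laws are then immediate. For (h-left unit) and (h-right unit): $\mea{\stremp}$ has no rule node and the empty interface, so juxtaposing it amounts (after deleting its $\stremp$-nodes) to the identity on the other operand. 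For (v-left unit) and (v-right unit): for a string $\astr$, the tragr $\mea{\astr}$ is the ``cylinder'' consisting of the straight-line dag of $\astr$ as source and as target joined by identity wiring, with no rule node; composing it in series with another tragr and then eliding the shared interface returns that tragr unchanged, since elision normalises away precisely such a cylinder. For (h-associativity): parallel composition of port graphs is associative up to isomorphism. For (v-associativity): series composition followed by elision is associative up to isomorphism, using that elision is terminating and confluent (first item of Remark~\ref{rem:empty}), so the order in which intermediate interfaces are formed and elided is irrelevant — the result is their common normal form.

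The main obstacle is the remaining law, \emph{(exchange)}: $\cnvsco{\acnv\bcnv}{\ccnv\dcnv}=(\cnvsco{\acnv}{\ccnv})(\cnvsco{\bcnv}{\dcnv})$. One must show that ``juxtapose, then compose-and-elide'' agrees up to isomorphism with ``compose-and-elide each half, then juxtapose''. The point is that the shared intermediate interface on the left — the dag of the concatenated intermediate string, obtained by juxtaposing the dags of the two intermediate pieces and deleting the $\stremp$ between them — is exactly the disjoint union of the two intermediate interfaces occurring on the right; and elision is an interaction-net-style \emph{local} rewrite (Remark~\ref{rem:empty}), so eliding the big interface has the same effect as eliding the two pieces independently. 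Moreover the interface dags pass letters through by type in a position-respecting way, so the connections produced by elision link outputs of $\mea{\acnv}$ only to inputs of $\mea{\ccnv}$, and outputs of $\mea{\bcnv}$ only to inputs of $\mea{\dcnv}$; hence, after discarding node identities, both sides present the same port graph: the disjoint union of $\mea{\acnv},\mea{\bcnv},\mea{\ccnv},\mea{\dcnv}$ with this wiring, and the same outer interface. A subsidiary fact used here — and which also makes the ``matching modulo $\equiv_M$'' in (transitivity) and (juxtaposition) harmless at the level of tragrs — is that the dag of a string does not depend on how the string is parsed as a concatenation; this follows from unique reading (Example~\ref{exa:string}) together with the normal-form description of strings in Remark~\ref{rem:oudenadic}. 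Once (exchange) is settled, combining the two ingredients as above completes the proof; the whole argument parallels the corresponding one for trace relations referenced in the footnote, and its effect is illustrated by Example~\ref{exa:blagr}.
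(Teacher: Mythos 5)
Your proposal is correct and follows essentially the same route as the paper: verify that each law of Table~\ref{tab:str:rl} is validated by $\smea$ (units via removal of $\stremp$s resp.\ elision of ladders, associativity and (exchange) via locality and completeness of the $\stremp$-removal/elision rewrites), the congruence-closure induction being left implicit in the paper. Your extra explicitness about compatibility with the term-forming operations and about parsing-independence of string dags under $\equiv_M$ is sound bookkeeping, not a different argument.
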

\begin{proof}
  We show for each law in Table~\ref{tab:str:rl}
  its left- and right-hand sides are mapped to the same tragr by $\mea{\,}$:
  \begin{itemize}
  \item
    For the monoid laws (h-left unit), (h-right unit) and (h-associativity)
    for horizontal composition, the former two follow from that the parallel
    composition of a tragr with $\mea{\stremp}$ on either side,
    amounts to first introducing and then immediately removing $\stremp$s.
    Associativity holds since removing the $\stremp$s and redirecting the
    respective input and output edges are local and independent actions.
  \item
    For the monoid laws (v-left unit), (v-right unit) and (v-associativity)
    for vertical composition, the former two follow from
    that for any string $\astr$, $\mea{\astr}$ is a \emph{ladder}, a tragr
    only comprising the straight line graphs of its source and target
    string, each the reverse of the other. For the sequential composition with
    a ladder on either side, elision amounts to the immediate removal
    of (the reverse of) the ladder.
    Associativity holds since elision is complete (confluent and terminating)
    and can be postponed until after
    connecting the respective input and output ports, which
    are local and independent actions.
  \item
    The (exchange) law holds by combining 
    the reasoning in the previous two items;
    combining removal of $\stremp$s with elision $\aArs$ is complete and
    can be postponed until after redirecting the input and output edges,
    which are local and independent actions; \footnote{%
For the reasoning to apply it is essential that
\emph{both} sides of (exchange) law are proof terms,
i.e.\ that sources and targets of the constituting proof
terms should match appropriately so that both sides
indeed yield tragrs; cf.~Remark~\ref{rem:match}.}
    see Figure~\ref{fig:exchange}.
     \qedhere
  \end{itemize}
\end{proof}%
\begin{figure}%
\def\efig{$\stremp$}%
\begin{center}%
\scalebox{.36}{\input{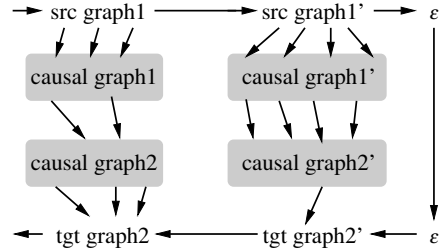}}%
\end{center}%
\caption{Tragr illustrating (exchange)} \label{fig:exchange}%
\end{figure}%
We conclude this section with showing that any tragr 
can be read back into a multistep reduction, by means
of a procedure that we dub topological \emph{multi}sorting,
which is analogous to topological sorting but selects in each
stage \emph{all} minimal elements, instead 
of just a \emph{single} one, cf.\ \cite{Plot:Prat:96}.
\begin{definition} \label{def:tm}
  The \emph{topological multi-sorting} function $\sfungr$ 
  mapping a tragr from $\astr$ to $\bstr$ 
  to a multistep reduction having $\astr$ as source and $\bstr$ as target,
  is defined by induction on size  and cases on its causal graph.
  
  If the causal graph is empty, planarity of tragrs dictates 
  the tragr is a ladder 
  (as in the Proof of Lemma~\ref{lem:causal}; 
   cf.\ the bottom-right of Appendix~\ref{app:stages}), 
  so we have $\astr = \bstr$ and may return the empty multistep $\astr$.
  
  If the causal graph is non-empty, let its \emph{minimal layer} $\sinit$ comprise 
  its minimal nodes w.r.t.\ the partial order induced by (taking the reflexive--transitive
  closure of) the dag. To construct the 
  multistep $\aStp$ we juxtapose, starting from the input of the tragr,
  the labels (letters) of nodes in the dag for $\astr$ not covered by nodes in $\sinit$,
  interspersed with the labels (rule symbols) of those covering nodes in $\sinit$.
  Let $\astr'$ be the target of $\aStp$, and consider the tragr obtained
  by replacing for every node labelled by some rule $\arul$ in 
  $T$ the source dag of $\arul$ by its target dag.
  By planarity it follows (cf.\ the top row of Appendix~\ref{app:stages})
  that the resulting tragr is from $\astr'$ to $\bstr$.
  Therefore, it suffices to vertically compose $\aStp$ with
  the $\sfungr$-image of this tragr, which exists by the IH.
  
  In both cases, we obtain a vertical composition of multisteps 
  having $\astr$ as source and $\bstr$ as target, 
  giving rise to a multistep reduction after removing 
  a trailing empty multistep.
\end{definition}
\begin{example}
  Applying topological multi-sorting $\sfungr$ to the 
  tragr on the right in Figure~\ref{fig:tragr} 
  gives rise to the $6$ successive stages displayed in Appendix~\ref{app:stages}.
\end{example}
\begin{remark} \label{rem:planar}
  Note how the planarity requirement on tragrs of Definition~\ref{def:tragr} 
  was used in Definition~\ref{def:tm} to 
  guarantee that all tragrs having an empty causal graph 
  read back as the empty multistep on their source (and target). 
  In particular, planarity disallows `rewirings' having crossing edges.
\end{remark}
\begin{lemma} \label{lem:deconstruct}
  $\sfunrg$ after $\sfungr$ is the identity on tragrs.
\end{lemma}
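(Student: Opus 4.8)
The plan is to prove, by induction on the size of the tragr $G$ (equivalently, on the number of nodes in its causal graph), that $\sfunrg$ after $\sfungr$ is the identity, i.e.\ $\mea{\sfungr(G)} = G$ up to graph isomorphism, for every tragr $G$ from $\astr$ to $\bstr$. The induction follows the case analysis of $\sfungr$ in Definition~\ref{def:tm}: one unfolds the recursion of $\sfungr$, rewrites the resulting proof term under $\mea{\,}$ using the clauses of Definition~\ref{def:tragr:alg}, and checks that the graph surgery carried out by $\mea{\,}$ undoes that carried out by $\sfungr$.

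\emph{Base case.} If the causal graph of $G$ is empty then, by planarity, $G$ is a ladder, so $\bstr = \astr$ and $\sfungr(G)$ is the empty multistep $\astr$. Unfolding the (letter and empty) and (juxtaposition) clauses, $\mea{\astr}$ for a string $\astr$ with letters $\aisym{\natone}\ldots\aisym{\anat}$ is exactly the two straight-line dags of $\astr$ (once as source, once as target) joined by $\ast$-edges and terminated by $\stremp$-nodes -- that is, the ladder on $\astr$. Since $G$ is itself this ladder, $\mea{\sfungr(G)} = \mea{\astr} = G$.

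\emph{Inductive step.} Suppose the causal graph is non-empty; let $\sinit$ be its minimal layer, $\aStp$ the multistep read off from $\sinit$ as in Definition~\ref{def:tm}, $\astr'$ the target of $\aStp$, and $G'$ the tragr from $\astr'$ to $\bstr$ obtained by \emph{firing} $\sinit$ (replacing the source dag of each minimal rule by its target dag). Then $\sfungr(G) = \cnvsco{\aStp}{\sfungr(G')}$ up to elision of a trailing empty multistep, which $\mea{\,}$ respects by Lemma~\ref{lem:causal}; so by the (transitivity) clause $\mea{\sfungr(G)}$ is obtained by stacking $\mea{\aStp}$ on top of $\mea{\sfungr(G')}$ and eliding the intermediate interface. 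By the induction hypothesis $\mea{\sfungr(G')} = G'$. It remains to identify $\mea{\aStp}$ with the \emph{top slice} of $G$: the dag of $\astr$, the nodes of $\sinit$ together with their edges into that dag, and a fresh output interface equal to the dag of $\astr'$. Granting this, stacking the top slice of $G$ onto $G'$ via (transitivity) and eliding the (now duplicated) $\astr'$-interface precisely reverses the firing of $\sinit$ and reattaches its nodes, so the result is $G$.

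\emph{The key identification -- the main obstacle.} To see that $\mea{\aStp}$ is the top slice of $G$, analyse it through the (juxtaposition) clause. Since $\aStp$ is assembled by scanning the source interface of $G$ from left to right, emitting a letter for each letter-node not covered by $\sinit$ and a rule symbol for each node of $\sinit$, it is a juxtaposition $c_1\cdots c_m$ of letters and rule symbols whose factorisation matches the segmentation of the source dag of $G$ induced by $\sinit$; here one uses that a minimal node cannot have an incoming edge from another rule node (that node would then lie strictly below it), so its input ports hang directly off the source interface, and that planarity forces the covered segments to be contiguous and non-crossing. As $\mea{c_i}$ is a trivial ladder segment when $c_i$ is a letter, and the rule node flanked by the straight-line dags of its source and target when $c_i$ is a rule, juxtaposing the $\mea{c_i}$ and tidying the $\stremp$-nodes yields exactly that top slice, with output interface the dag of $\astr'$. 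The real work is this port-graph bookkeeping: showing that $\sinit$ partitions the source interface of $G$ into the contiguous segments read off by $\sfungr$ (using planarity and minimality), and that connecting-and-eliding inverts firing the minimal layer; the rest is routine unfolding of the two definitions.
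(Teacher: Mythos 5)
Your proof is correct and follows essentially the same route as the paper's: induction on the size of the causal graph, with the ladder argument via (letter)/(empty)/(juxtaposition) in the empty case, and in the non-empty case a decomposition through (transitivity) into the minimal-layer slice (identified with $\mea{\aStp}$ using (rule) in addition) and the remainder handled by the induction hypothesis. Your extra detail on the planarity/minimality bookkeeping for the segmentation of the source interface only makes explicit what the paper leaves implicit.
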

\begin{proof}
  By induction on size and cases on the causal graph of a tragr from $\astr$ to $\bstr$.
  
  If the causal graph is empty then by the observation in 
  Definition~\ref{def:tm} and Remark~\ref{rem:planar}, 
  the tragr is a ladder, $\astr = \bstr$,
  and it is mapped to the empty multistep $\astr$.
  Then $\funrg{\astr}$ is \emph{the} ladder from $\astr$ to $\astr$ again, 
  as follows by induction on the length of $\astr$ 
  and Definition~\ref{def:tragr:alg},
  using (letter) / (empty) in the base case
  and (juxtaposition) in the induction step.
  
  If the causal graph is not empty, then per the construction 
  in Definition~\ref{def:tm}, it is obtained 
  by (transitivity) from the tragr from $\astr$ to $\astr'$ 
  of its minimal layer $\sinit$,
  and the tragr from $\astr'$ to $\bstr$ 
  of its remaining nodes / causal graph $\stail$.
  We conclude by that the former is obtained from
  $\mea{\aStp}$ for $\aStp \hastype \cnvge{\astr}{\astr'}$
  the multistep constructed from $\sinit$ in Definition~\ref{def:tragr:alg},
  and by the induction hypothesis for the latter.
  To see the former, one proceeds as for the empty causal graph
  additionally using (rule) in the base case.
\end{proof}
\begin{remark}
  In fact, \emph{any} way to transform a tragr 
  into a proof term by repeatedly \emph{decomposing} the tragr
  by means of the \emph{inverses} of (juxtaposition) and (transitivity),
  mirrored by \emph{composing} the corresponding proof terms by means of
  horizontal and vertical composition respectively,
  and transforming the base cases (letter), (empty) and (rule)
  in the natural way, will preserve the result (Lemma~\ref{lem:deconstruct}).
  The particular such transformation $\sfungr$ was chosen here
  because it does not just yield \emph{any} proof term 
  but a \emph{greedy multistep reduction},
  which will be essential for the unique representation purposes of the next section.
\end{remark}

\section{Greedy multistep reductions} \label{sec:greedy}

We first give a standard algorithm for transforming a proof term
into a permutation equivalent \emph{greedy} one~\cite{Deho:15},
and next show there is a bijection between such greedy 
multistep reductions and tragrs. From this we conclude,
in a semantic way, that both constitute unique representatives
of permutation equivalence classes of proof terms.

We give a novel description of greediness and 
the greedy algorithm of~\cite{Deho:15}, based on the analogy with sorting and 
standardisation~\cite{Klop:80,Tere:03} in the literature.
In sorting, (adjacent) \emph{inversions} are consecutive 
elements that are out-of-order, 
and in term rewriting, \emph{anti-standard} pairs~\cite{Klop:80,Mell:05} are 
consecutive steps in a reduction such that the latter is outside (to the left of) the former.
Such pairs of out-of-order elements are of interest since they provide a \emph{local}
characterisation both of \emph{being sorted}, i.e.\ the \emph{absence} of such pairs,
and of bringing the list / reduction \emph{closer to being sorted}, 
by \emph{permuting} the out-of-order pair. This makes both processes 
amenable to a rewriting approach, with bubblesort being an example 
for sorting and the extraction of the leftmost-contracted-redex being an 
example for standardisation~\cite{Klop:80,Khas:Glau:96,Mell:96,Tere:03,Mell:05,Brug:08}.
To make the greedy algorithm fit the mould, we define \emph{loath} pairs
as consecutive multisteps where some rule symbol in the $2^{\text{nd}}$ 
is \emph{not caused} by the $1^{\text{st}}$, so may be permuted up front, 
signalling non-greediness. 
This is phrased in terms of residuation; see Definition~\ref{def:res}.
\begin{definition} \label{def:loath} \label{def:greedy}
  A proof term is \emph{greedy} if it is a multistep reduction without loath pairs,
  where a pair $\cnvsco{\aStp}{\bStp}$ of consecutive multisteps is \emph{loath} if
  there is a step $\cStp$ co-initial with $\aStp$
  such that $\setle{\aStp}{\cStp}$ and having residual
  step $\bstp \isdefd \resa{\cStp}{\aStp}$ 
  with $\setle{\bstp}{\bStp}$.
  \emph{Swapping $\cStp$ for $\cnvsco{\aStp}{\bStp}$} 
  then results in $\cnvsco{\cStp}{(\resa{\bStp}{\bstp})}$.
  Exhaustive swapping followed by removing trailing empty multisteps 
  yields a \emph{greedy decomposition}.
\end{definition} 
\begin{example}
  The multistep reduction $\acnv'$ is greedy, but $\acnv$ isn't
  as is clear from 
  $\stpsco{\A\B\beta
}{\stpsco{\overline{\stpsco{\A\alpha\underline{\A\A\B}}{\A\A\underline{\beta}}}
}{\stpsco{\beta\A\A\B
}{\stpsco{\B\beta\A\A\B
}{\overline{\stpsco{\alpha\underline{\A\A\B}\A\A\B
      }{\A\underline{\beta}\A\A\B
      }}}}}}$, where we have
  overlined its loath pairs, and 
  underlined the rule symbols and their left-hand sides involved in swapping.
  The loath pair
  $\stpsco{\A\alpha\underline{\A\A\B}}{\A\A\underline{\beta}}$ swaps into
    $\stpsco{\A\alpha\underline{\beta}}{\A\A\underline{\B\A\A\B}}$, and 
  $\stpsco{\alpha\underline{\A\A\B}\A\A\B}{\A\underline{\beta}\A\A\B}$ swaps into
  $\stpsco{\alpha\underline{\beta}\A\A\B}{\A\underline{\B\A\A\B}\A\A\B}$.
  As one may verify, exhaustive swapping yields 
  $\cnvsco{\cnvsco{\acnv'}{\A\B\A\A\B\A\A\B}}{\A\B\A\A\B\A\A\B}$,
  hence a greedy decomposition of $\acnv$ is $\acnv'$.
  Intuitively, this is as desired since $\acnv'$ exhibits 
  maximal concurrency while performing the same tasks performed in $\acnv$.
\end{example}
By standard residual theory~\cite[Chapter~8]{Tere:03}, 
swapping yields a pair of consecutive multisteps permutation 
equivalent to the original pair, as in the example.
Moreover, the size (number of rule symbols)
of the $2^{\text{nd}}$ multistep decreases per construction, so 
swapping decreases the \emph{Sekar--Ramakrishnan 
measure}~\cite[Definition~8.5.17]{Tere:03}, 
measuring a multistep reduction by the lexicographic product of the sizes of 
the multisteps in it from tail to head. Since if necessary we may first transform a 
proof term into a permutation equivalent (single step hence multistep) \emph{reduction} 
by the Logicality Lemma~\ref{lem:logicality}, we have:
\begin{lemma} \label{lem:greedy}
  A proof term can be transformed into a
  permutation equivalent greedy multistep reduction.
\end{lemma}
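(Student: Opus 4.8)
The plan is to assemble results already at hand: the Logicality Lemma provides a multistep reduction to start from, exhaustive swapping of loath pairs removes the obstructions to greediness, and the Sekar--Ramakrishnan measure guarantees that this swapping halts.

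First I would apply Lemma~\ref{lem:logicality} to the given proof term, obtaining a permutation equivalent reduction. As a single step is in particular a nonempty multistep, such a reduction is a multistep reduction in the sense of Definition~\ref{def:multistep}; hence we may assume from the outset that we are dealing with a multistep reduction, say a vertical composition $\aStp_{1}\cdot\aStp_{2}\cdots\aStp_{n}$ of $n$ nonempty multisteps.

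Next I would run exhaustive swapping (Definition~\ref{def:greedy}) and argue termination. A swap replaces a consecutive loath pair $\cnvsco{\aStp}{\bStp}$ inside the reduction by $\cnvsco{\cStp}{(\resa{\bStp}{\bstp})}$, where $\cStp$ is a single step and, $\bstp$ being a step and hence nonempty, the multistep $\resa{\bStp}{\bstp}$ contains strictly fewer rule symbols than $\bStp$ by linearity of string rewriting. The number $n$ of multisteps is unchanged, and by standard residual theory~\cite[Chapter~8]{Tere:03} the new pair is permutation equivalent to the old one; permutation equivalence being a congruence for vertical composition, the whole reduction stays permutation equivalent to the original proof term. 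In the tuple of multistep sizes read from tail to head, the swap strictly decreases the entry for the second (tail-ward) component of the pair, leaves all earlier entries unchanged, and alters only the later, hence less significant, entry for the first component; so the Sekar--Ramakrishnan measure strictly decreases in the lexicographic order on $\Nat^{n}$, which is well-founded. Therefore only finitely many swaps can be performed.

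Once no loath pair remains, removing the trailing empty multisteps yields, by Definition~\ref{def:greedy}, a greedy multistep reduction; this last step preserves permutation equivalence by the (v-right unit) law of Table~\ref{tab:str:rl} and introduces no loath pairs, as it merely deletes multisteps from the tail. The only point that I expect to need care is the termination argument: one must spell out that swapping keeps the length $n$ fixed and that, in the product read from tail to head, the entry that strictly drops is more significant than the one that may rise. Everything else — that swapping preserves permutation equivalence, and that the second component shrinks — is delivered by the residual theory quoted just before the statement.
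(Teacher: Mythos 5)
Your proposal is correct and follows the paper's own route exactly: Logicality to obtain a (single-step, hence multistep) reduction, then exhaustive swapping of loath pairs, with permutation equivalence preserved by residual theory and termination by the strict lexicographic decrease of the Sekar--Ramakrishnan measure read from tail to head. The only slip is inessential: the $\cStp$ substituted for the first component of a loath pair is in general a proper multistep (namely $\aStp$ joined with one extra redex, as in the example where $\A\alpha\A\A\B\cdot\A\A\beta$ swaps to $\A\alpha\beta\cdot\A\A\B\A\A\B$), not a single step, but your termination argument only uses that this first, less significant, entry may change arbitrarily while the tail-ward entry strictly drops.
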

\begin{remark}
  To give an idea how residual theory~\cite[Table~8.5 in Section~8.7.3]{Tere:03} 
  may be employed to show swapping preserves permutation equivalence,
  first note that
  $\setle{\aStp}{\cStp}$ entails $\resa{\aStp}{\cStp}$ is an empty multistep.
  Therefore, by commutativity of join
  $\cnveqv{\cStp
 }{\cnveqv{\cnvsco{\cStp}{(\resa{\aStp}{\cStp})}
         }{\cnvsco{\aStp}{(\resa{\cStp}{\aStp}})
         }}$.
  Similarly, $\setle{\seteq{\resa{\cStp}{\aStp}}{\bstp}}{\bStp}$ entails
  $\cnveqv{\bStp}{\cnvsco{(\resa{\cStp}{\aStp})}{(\resa{\bStp}{(\resa{\cStp}{\aStp})})}}$.
  By combining both
  $\cnveqv{\cnvsco{\aStp}{\bStp}
 }{\cnveqv{\cnvsco{\aStp}{\cnvsco{(\resa{\cStp}{\aStp})}{(\resa{\bStp}{(\resa{\cStp}{\aStp})})}}
         }{\seteq{\cnvsco{\cStp}{(\resa{\bStp}{(\resa{\cStp}{\aStp})})}
                }{\cnvsco{\cStp}{(\resa{\bStp}{\bstp})}}
         }}$.
\end{remark}
\begin{remark} \label{rem:interval}
  An efficient procedure for searching for loath pairs can be 
  based on the observation that due to linearity of string rewrite systems,
  an occurrence of either a source or target of a rule can
  be identified with a \emph{pattern} in the sense of~\cite[Definition~8.6.21]{Tere:03},  
  i.e.\ with a \emph{convex} set of positions in the tree of the string
  having vertices as boundary. Following the main idea of~\cite{Oost:20},
  to see whether $\cnvsco{\aStp}{\bStp}$ is loath, it therefore suffices to check
  whether each pattern of a \emph{source} of a rule occurring in $\bStp$
  has overlap with some \emph{target} of a rule occurring in $\aStp$.
  Since a pattern in a string simply is an \emph{interval}, characterised by the
  two vertices constituting its boundary, 
  a single top--down pass through both string-trees checking disjointness of 
  intervals via their boundaries, suffices.
  If for some pattern there is no overlap, we obtain a loath pair
  by setting $\cStp$ to $\aStp$ in which the pattern was replaced by the rule.
\begin{figure}[htb]
\begin{center}%
\def\afig{$\vposemp$}
\def\bfig{$\eposone$}
\def\cfig{$\vposone$}
\scalebox{.305}{\input{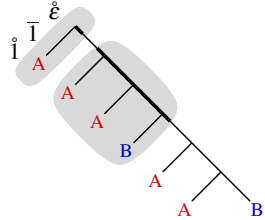}}%
\end{center}%
\caption{Non-overlapping patterns (in grey) and intervals (thick lines) of
$\A$ and $\A\A\B$ in $\underline{\A}\sunderline{\A\A}\B\A\A\B$}
\label{fig:pattern}%
\end{figure}
 For example, see Figure~\ref{fig:pattern}, using underlining to indicate patterns,
 that $\alpha \A\A\B\A\A\B,\A\beta \A\A\B$ in $\acnv$ is a loath pair follows
  from that the pattern\footnote{%
Patterns comprise node (over-ring) and edge (over-bar) positions; cf.~\cite[Example~8.6.4]{Tere:03} for more on their notation.}
  $\setstr{\vpostwo,\postwo\cdot\eposone,\postwo\cdot\vposone,\postwo\cdot\epostwo,\postwo\cdot\vpostwo,
   \postwo\cdot\postwo\cdot\eposone,
   \postwo\cdot\postwo\cdot\vposone,\postwo\cdot\postwo\cdot\epostwo,\postwo\cdot\postwo\cdot\vpostwo,
   \postwo\cdot\postwo\cdot\postwo\cdot\eposone,\postwo\cdot\postwo\cdot\postwo\cdot\vposone}$
  in $\A\underline{\A\A\B}\A\A\B$ corresponding to the source $\A\A\B$ of the rule $\beta$,
  does not have overlap with the pattern 
  $\setstr{\vposemp,\eposone,\vposone}$ 
  in $\underline{\A}\A\A\B\A\A\B$
  corresponding to the target $\A$ of the rule $\alpha$.
  This in turn follows from that the corresponding intervals
  $[\vpostwo,\postwo\cdot\postwo\cdot\vpostwo]$ and $[\vposemp]$ are disjoint
  since $\vposemp$ is smaller than $\vpostwo$.
  By disjointness / non-overlap, replacing in $\alpha \underline{\A\A\B}\A\A\B$ 
  the $\beta$-pattern $\A\A\B$
  by the rule $\beta$ yields the multistep $\alpha\beta \A\A\B$, as desired.
\end{remark}
\begin{theorem} \label{thm:bi}
  The maps $\sfunrg$ 
  and $\sfungr$ constitute a(n effective) 
  bijection 
  between greedy multistep reductions and tragrs.
\end{theorem}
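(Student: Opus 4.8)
Recall that $\sfunrg$ is the proof term algebra $\mea{\,}$ and $\sfungr$ is topological multi-sorting. By Lemma~\ref{lem:deconstruct} we already have $\mea{\sfungr(G)} = G$ for every tragr $G$, and the observation after Definition~\ref{def:tragr:alg} gives that $\mea{\acnv}$ is a tragr whenever $\acnv \hastype \cnvge{\astr}{\bstr}$, so $\sfunrg$ sends greedy multistep reductions to tragrs. The plan is therefore to establish two further facts: (i) $\sfungr$ sends every tragr to a \emph{greedy} multistep reduction, and (ii) $\sfungr(\mea{\acnv}) = \acnv$ for every greedy multistep reduction $\acnv$. Together with Lemma~\ref{lem:deconstruct} these make $\sfunrg$ and $\sfungr$ mutually inverse between greedy multistep reductions and tragrs, and the bijection is effective since both maps are terminating procedures (the inductive clauses of Definition~\ref{def:tragr:alg} followed by the complete elision $\aArs$, resp.\ the induction on size of Definition~\ref{def:tm}).

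For (i), the output of $\sfungr$ is a multistep reduction by Definition~\ref{def:tm}, so it remains to exclude loath pairs. Let $\aStp$ and $\bStp$ be the multisteps produced at two consecutive stages: $\aStp$ from the minimal layer $\sinit$ of the current causal graph $C$, and $\bStp$ from the minimal layer of the tragr whose causal graph $C'$ is $C$ with the $\sinit$-nodes contracted to their target dags. A rule node $r$ of $\bStp$ cannot be minimal in $C$ (else $r \in \sinit$), so it has an incoming edge in $C$; moreover every incoming edge of $r$ in $C$ must come from an $\sinit$-node, for otherwise it would survive into $C'$ and keep $r$ non-minimal there. Hence $r$ reads a letter produced by $\aStp$, i.e.\ $r$ is caused by $\aStp$. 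As every rule of $\bStp$ is caused by $\aStp$, there is no step $\cStp$ co-initial with $\aStp$ with $\setle{\aStp}{\cStp}$ whose residual after $\aStp$ is contained in $\bStp$, so $\cnvsco{\aStp}{\bStp}$ is not loath; since this holds at all stages, $\sfungr(G)$ is greedy.

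The heart of the matter is (ii), which I would prove by induction on the number of multisteps of $\acnv$. If $\acnv$ is the empty multistep $\astr$ then $\mea{\astr}$ is a ladder and $\sfungr$ returns $\astr$. Otherwise write $\acnv = \cnvsco{\aStp_1}{\ccnv}$; by Definition~\ref{def:tragr:alg}, $\mea{\acnv}$ is obtained by (transitivity) from $\mea{\aStp_1}$ and $\mea{\ccnv}$. The key claim is that the minimal layer of the causal graph of $\mea{\acnv}$ consists exactly of the rule nodes contributed by $\aStp_1$. Inclusion $\supseteq$ is routine: by (juxtaposition) a rule node of a multistep is wired only to that multistep's own interface, not to sibling rule nodes, and (transitivity) wires outputs of the $\aStp_1$-block into inputs of $\mea{\ccnv}$ while the inputs of the $\aStp_1$-block face the source dag, so all causal edges of $\mea{\acnv}$ run forward in time and the $\aStp_1$-nodes have no incoming ones. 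For $\subseteq$, suppose a minimal node $r$ comes from $\aStp_i$ with $i \ge 2$; minimality forces the pattern of the rule of $r$ to trace back unconsumed to the source of $\acnv$, hence to be present and untouched in the source of $\aStp_{i-1}$, with $\aStp_{i-1}$ not rewriting at that position. Extending $\aStp_{i-1}$ by that rule at that position yields a step $\cStp$ co-initial with $\aStp_{i-1}$ with $\setle{\aStp_{i-1}}{\cStp}$ whose residual after $\aStp_{i-1}$ is the occurrence $r$ inside $\aStp_i$, so $\cnvsco{\aStp_{i-1}}{\aStp_i}$ is loath, contradicting greediness of $\acnv$. Granting the claim, $\sfungr$ first reads off from $\sinit$ exactly the multistep $\aStp_1$ (planarity of tragrs and the left-to-right order on the source dag leave no choice), and the residual tragr it then recurses on is, by the (transitivity) clause of Definition~\ref{def:tragr:alg} read in reverse, isomorphic to $\mea{\ccnv}$; the induction hypothesis concludes.

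The step I expect to be the main obstacle is making rigorous, in both (i) and the $\subseteq$ part of (ii), the correspondence between the graph-theoretic data of $\mea{\acnv}$ — causal edges, minimal nodes of the causal graph — and the residual-theoretic notion that one multistep is \emph{caused by} another, i.e.\ the absence of a loath pair in the sense of Definition~\ref{def:loath}. Since residuation is developed here only informally, the bridging statements, in particular that a minimal rule node of $\mea{\acnv}$ has its pattern present and unconsumed all the way up to the source string, have to be justified carefully from the construction of $\mea{\,}$ in Definition~\ref{def:tragr:alg} rather than taken for granted; once that dictionary is in place the two inversion identities fall out as above.
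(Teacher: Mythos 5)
Your proposal is correct and follows essentially the same route as the paper's proof: it decomposes the theorem into showing that $\sfungr$ produces \emph{greedy} multistep reductions (via the layer structure of the causal graph), that $\sfungr$ after $\sfunrg$ is the identity on greedy multistep reductions (via the key claim that the minimal layer of $\mea{\acnv}$ is exactly the rule nodes of the first multistep, proved by deriving a loath pair from any violation), and invokes Lemma~\ref{lem:deconstruct} for the converse composition. The only cosmetic difference is that you localise the contradiction directly at the pair $(\aStp_{i-1},\aStp_i)$ rather than via a minimal-length counterexample, and the ``bridge'' you rightly flag as the delicate point is exactly what the paper handles through the interval/pattern observation of Remark~\ref{rem:interval}.
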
 
\begin{proof}
  We first show that the topological multi-sorting function $\sfunrg$ 
  maps tragrs not just to multistep reductions but to greedy such.
  Next we show that when restricting the domain of 
  the interpretation $\sfunrg$ to greedy multistep reductions,
  $\sfunrg$ and $\sfungr$ are inverse to each other
  as functions from greedy multistep reductions to tragrs and conversely.
  From that we conclude as both the functions $\sfunrg$ and $\sfungr$ are effective.

  We show that when computing the $\sfungr$-image of  
  a tragr, consecutive stages yield multisteps that are not loath 
  pairs, by induction on the number of stages.
  There is only something to show when there is more than one stage. 
  So suppose $\sfungr$ yields a composition $\cnvsco{\aStp}{\acnv}$ 
  with $\aStp$ obtained from the minimal layer of rule nodes 
  $\sinit$ of the tragr, and 
  $\acnv$ from its remaining nodes / causal graph $\stail$, 
  non-empty by assumption.
  By the IH $\acnv$ is greedy, and non-empty so has
  some first multistep, say $\bStp$, constructed from the minimal
  layer, say $\tinit$, of $\stail$.
  Per definition of $\sfungr$ each of the nodes in $\tinit$
  is reachable from some node in $\sinit$.
  Since there are no edges between the nodes in a single layers,
  this entails that for each of the nodes in $\tinit$ there
  is an edge to it from some node in $\sinit$.
  As a consequence, cf.\ Remark~\ref{rem:interval},
  the corresponding pair $\cnvsco{\aStp}{\bStp}$ 
  of consecutive multisteps is greedy / not loath.
  Thus, $\sfunrg$ maps to greedy multistep reductions.
  \begin{itemize}
  \item  
  That $\sfungr$ after $\sfunrg$ is
  the identity on greedy multistep reductions,
  we show by induction on the length of such a reduction.  
  We employ the no(ta)tions of Definition~\ref{def:tm},
  in particular we employ $\sinit$ to denote the layer of 
  minimal elements of (the causal graph of) a tragr.
  
  For the empty and single-multistep reductions this is trivial.
  Otherwise, the reduction has shape $\cnvsco{\aStp}{\acnv}$.
  By definition $\funrg{\cnvsco{\aStp}{\acnv}}$ is the serial
  composition of 
  $\funrg{\aStp}$ and $\funrg{\acnv}$ and we claim that by greediness 
  the steps in the minimal layer of the tragr $\funrg{\cnvsco{\aStp}{\acnv}}$ 
  are those of $\funrg{\aStp}$, i.e.\ 
  $\init{\funrg{\cnvsco{\aStp}{\acnv}}} = \init{\funrg{\aStp}}$.
  Then, $\aStp$ is the result of the first stage of $\sfungr$ and
  $\trmeq{\fungr{\funrg{\cnvsco{\aStp}{\acnv}}}
 }{\trmeq{\cnvsco{\aStp}{\fungr{\funrg{\acnv}}}
        }{\cnvsco{\aStp}{\acnv}
        }}$ by the IH for $\acnv$.

  It remains to prove the claim that $\init{\funrg{\cnvsco{\aStp}{\acnv}}} = \init{\funrg{\aStp}}$ for a greedy multistep reduction of shape 
  $\cnvsco{\aStp}{\acnv}$, so with $\acnv$ non-empty.
  Since $\setge{\init{\funrg{\cnvsco{\aStp}{\acnv}}}}{\init{\funrg{\aStp}}}$
  trivially holds, for arbitrary multistep reductions, suppose for 
  a proof by contradiction that 
  $\setle{\init{\funrg{\cnvsco{\aStp}{\acnv}}}}{\init{\funrg{\aStp}}}$ 
  does not hold, for $\cnvsco{\aStp}{\acnv}$ of minimal length. Then 
  there must be some node in $\init{\funrg{\acnv}}$ 
  in $\init{\funrg{\cnvsco{\aStp}{\acnv}}}$,
  per construction of $\funrg{\cnvsco{\aStp}{\acnv}}$
  as the serial composition of 
  $\funrg{\aStp}$ and $\funrg{\acnv}$.
  By minimality this node must in fact be in $\init{\funrg{\bStp}}$
  for $\bStp$ the first multistep of $\acnv$, with the node 
  corresponding to, say, step $\setle{\bstp}{\bStp}$.
  But then $\cnvsco{\aStp}{\bStp}$ would be a loath pair,
  as it allows swapping the join of $\aStp$ with $\bstp$.\footnote{%
More precisely, the join of $\aStp$ with the origin of $\bstp$ 
along the converse of $\aStp$, 
which is a step acting on an interval in the dag of 
the source string of $\aStp$, as observed in
Remark~\ref{rem:interval}.
Note our reasoning would fail if rules were allowed to have empty 
left- or right-hand sides: If $\bstp$ were due to a rule with 
an empty left-hand side, or if $\aStp$ were to contain a rule 
with an empty right-hand side, then $\Chi$ might not be swappable.
}
  This contradicts the assumed greediness of $\cnvsco{\aStp}{\acnv}$.
  \item  
  The converse direction, that $\sfunrg$ after $\sfungr$ is
  the identity on tragrs, follows from Lemma~\ref{lem:deconstruct}. \qedhere
  \end{itemize}
\end{proof}
We can now establish our main result, that
one may compute a greedy multistep reduction, unique modulo permutation equivalence,
for any proof term by first evaluating into its tragr / causal graph (using $\sfunrg$), 
followed by the topological multi-sort (using $\sfungr$) yielding the greedy
multistep reduction. 
%
\begin{theorem} \label{thm:cbe}
  For every proof term $\acnv$, there exists 
  a unique greedy multistep reduction $\acnv'$
  such that $\sscnveqva{\acnv}{\acnv'}$.
\end{theorem}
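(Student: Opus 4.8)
The plan is to combine two complementary facts already established: the greedy algorithm produces \emph{some} greedy multistep reduction in each permutation class (Lemma~\ref{lem:greedy}), while the trace-graph semantics is a sound invariant of permutation equivalence (Lemma~\ref{lem:causal}) that is moreover injective when restricted to greedy multistep reductions (a corollary of the bijection of Theorem~\ref{thm:bi}). Soundness together with injectivity then pins each class down to a single greedy representative, so the whole argument is a short \emph{semantic} one: one does not reason about proof terms syntactically at all, only about their images under $\mea{\,}$.

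For existence I would invoke Lemma~\ref{lem:greedy} to obtain a greedy multistep reduction $\acnv'$ with $\cnveqv{\acnv}{\acnv'}$; I would additionally observe that this $\acnv'$ can be computed effectively as $\fungr{\mea{\acnv}}$, since $\cnveqv{\acnv}{\acnv'}$ gives $\mea{\acnv} = \mea{\acnv'}$ by Lemma~\ref{lem:causal}, while $\fungr{\mea{\acnv'}} = \acnv'$ because $\sfungr$ after $\sfunrg$ is the identity on greedy multistep reductions (Theorem~\ref{thm:bi}). For uniqueness, suppose $\acnv'$ and $\acnv''$ are greedy multistep reductions with $\cnveqv{\acnv}{\acnv'}$ and $\cnveqv{\acnv}{\acnv''}$. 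Then $\cnveqv{\acnv'}{\acnv''}$, so $\mea{\acnv'} = \mea{\acnv''}$ by Lemma~\ref{lem:causal}; applying $\sfungr$ and using Theorem~\ref{thm:bi} once more yields $\acnv' = \fungr{\mea{\acnv'}} = \fungr{\mea{\acnv''}} = \acnv''$, as required.

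I expect no genuine obstacle to remain: all the substance sits in Lemmas~\ref{lem:greedy} and~\ref{lem:causal} and in Theorem~\ref{thm:bi}. The one thing to be careful about is to appeal to the correct half of Theorem~\ref{thm:bi} --- what is needed here is that $\sfungr$ is a \emph{left} inverse of the proof term algebra $\sfunrg$ on greedy multistep reductions (equivalently, that $\sfunrg$ is injective there), not the dual identity that $\sfunrg$ after $\sfungr$ is the identity on tragrs. It is also worth remarking that, since $\mea{\,}$ and $\sfungr$ are both effective, the unique greedy representative of any proof term $\acnv$ is not merely shown to exist but is produced by the composite $\fungr{\mea{\acnv}}$, i.e.\ by first evaluating $\acnv$ into its tragr and then topologically multi-sorting, matching the description given just before the theorem.
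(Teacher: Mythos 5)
Your proposal is correct and follows essentially the same route as the paper: existence via Lemma~\ref{lem:greedy}, and uniqueness by applying $\smea$ (Lemma~\ref{lem:causal}) and then $\sfungr$, using that $\sfungr$ is a left inverse of $\sfunrg$ on greedy multistep reductions (Theorem~\ref{thm:bi}). Your closing observation that only this one direction of Theorem~\ref{thm:bi} is needed, and that the representative is effectively computed as $\fungr{\mea{\acnv}}$, matches the paper's own remarks surrounding the theorem.
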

\begin{proof}
  Lemma~\ref{lem:greedy} shows existence.
  To show uniqueness, consider greedy multistep reductions $\acnv'$ and $\acnv''$
  both permutation equivalent to $\acnv$. 
  By Lemma~\ref{lem:causal}, $\mea{\acnv'}$ and $\mea{\acnv''}$ 
  are the same tragr.
  Therefore, $\acnv' = \fungr{\mea{\acnv'}} = \fungr{\mea{\acnv''}} =
  \acnv''$ by $\sfungr$ being inverse to $\sfunrg$ on greedy multistep reductions
  by  Theorem~\ref{thm:bi}..
%
%
%
%
%
%
%
%
%
\end{proof}
\begin{remark}
  \begin{itemize}
  \item
  The proof only employs one direction (the first item in the proof) 
  of Theorem~\ref{thm:bi}.
  \item
  As a consequence, using that the greedy multistep reductions \emph{are}
  the normal forms w.r.t.\ swapping, 
  we have that swapping is confluent on multistep reductions.
  This could alternatively be established via Newman's Lemma, 
  using that swapping is terminating and showing local confluence.
  \end{itemize}
\end{remark}
%
\begin{example}
  The greedy multistep reduction $\acnv'$ is the unique
  representative of the permutation equivalence class of $\acnv$.
  Both are mapped to the tragr on the right in Figure~\ref{fig:tragr} 
  by the proof term algebra $\mea{\,}$, and topological
  multi-sorting.
\end{example}

\section{Conclusions} \label{sec:conclusions}

We have shown that L\'evy's notion of permutation 
equivalence~\cite{Levy:78} as known from \emph{term rewriting}~\cite{Tere:03} 
corresponds, after specialising it to \emph{string rewriting}, 
to the notion of causal equivalence as employed by Wolfram 
in his \emph{physics} project~\cite{Wolf:02,Wolf:20}.
This we achieved by introducing trace graphs, tragrs refining Wolfram's notion of causal graph,
as representatives of permutation equivalence classes of reductions.
Representing reductions as terms themselves, so-called proof terms~\cite{Mese:92},
allowed us to specify the representation map, from reductions to tragrs,
effectively by means of a (proof term) algebra that models permutation equivalence.
To show that representatives are unique, we gave a map back 
from tragrs to so-called greedy multistep reductions as known 
from Dehornoy's work in \emph{algebra}~\cite{Deho:15},
using a topological multi-sorting procedure,
showing both maps to be inverse to each other.

The study of \emph{causality} spans all the sciences, cf.~\cite{Pear:09},
hence it is no surprise that it has been discussed and 
mathematically modelled in many ways; to mention a
few~\cite{Newm:42,Levy:78,Boud:85,Star:89,Wins:89,Joya:Stre:91,
Huet:Levy:91,Mese:92,Lane:94,Mell:96,Khas:Glau:02,Gugl:Gund:Pari:10,Hirs:13,Deho:15,Wolf:20}.
From that point of view our results can be seen as linking 
models of causality known from \emph{rewriting}~\cite{Levy:78}
(permutation equivalence), \emph{algebra}~\cite{Deho:15} (greedy multistep reductions) and
\emph{physics}~\cite{Wolf:20} (causal graphs), respectively.
In general, we think that linking different perspectives on the
\emph{same} notion, as we did for causal equivalence here
but also before in~\cite[Chapter~8]{Tere:03}, is important.
Therefore we find it surprising that in the literature mentioned
cross-references beyond the borders of the specific field 
(rewriting, algebra, physics, category theory, 
proof theory, concurrency theory,\ldots) of a paper, are few and far between. 
We hope that our short paper can contribute to creating 
at least some awareness of that, in our opinion unfortunate,
situation for causal equivalence, and the interest in overcoming it.
\begin{remark}
  Not being a physicist, I am not in a position to assess the 
  potential relation of the various causal models to physics, cf.~\cite[Section~8]{Wolf:20}.
  However, I do think it already \emph{methodologically} interesting to see how 
  far one can push causal models, which phenomena 
  can or cannot be reconstructed from them.
  For example, in~\cite{Gijs:19} it is argued that \emph{time}
  cannot be reconstructed from purely \emph{causal} models (such as rewriting),
  as the latter fail to explain \emph{synchronicity}.\footnote{%
Roughly: Why are two identical but independent clocks 
seen to be in the same state?
} That makes one wonder whether adding natural structure 
  (think of a notion of strategy or a metric) could overcome that,
  could make time emergent.
\end{remark}
The concepts and techniques developed and employed
here are simple and natural.\footnote{%
This could be the reason for the 
observed disjointedness of the literature on
causal equivalence: natural notions are
likely to be developed autonomously multiple times.
\emph{Therefore} we think it worthwhile to (try to) link such notions.
}
For instance, topological 
multi-sorting could be easily presented in undergraduate 
Discrete Mathematics or Data Structures and Algorithms
courses. We view this as a strength rather than
as a weakness. Only \emph{because} the results are simple and natural
do we entertain the hope to extend them to more complex cases.
In particular, we hope that the results developed here for \emph{string} rewriting can serve as a 
stepping stone for tackling the problem, left open in~\cite[Chapter~8]{Tere:03}
(see Remark~\ref{rem:klop}), of giving a characterisation of permutation equivalence 
for \emph{term} rewriting\footnote{%
But also other types of \emph{structured} rewrite systems 
such as graph rewrite systems come to mind.
} by \emph{tracing}.
Interesting (classes of) term rewrite systems to target are: 
\begin{itemize}
\item
  \emph{Linear} systems.
  In the first-order case linearity can be brought about 
  by requiring for each rule that every variable occurs 
  either zero or one time in \emph{both} its sides.
  The characterisation should cover not only string rewrite systems
  via their \emph{monadic} embedding (see Remark~\ref{rem:oudenadic}),
  and \emph{chemical} systems~\cite[Example~8.6.1]{Tere:03},
  but also rules like $\rulstr{x + 0}{x}$.\footnote{%
I expect the adaptation of the results to first-order linear 
term rewriting to be largely unproblematic but still interesting.
For instance, trace graphs for term rewrite systems 
cannot be planar, unlike those for string rewrite systems here.}
  In the higher-order case~\cite[Chapter~11]{Tere:03}
  linearity could be brought about by restricting to a \emph{linear}  
  substitution calculus~\cite{Smit:17}.
\item
  \emph{Non-linear} systems. As illustrated in the introduction, the problem
  in the non-linear case is that the replicating effect of a non-linear 
  term rewrite step is not represented in the term structure,
  so cannot be traced. One could hope this can be overcome by reifying replication
  (so it becomes \emph{traceable}), 
  by making the \emph{substitution calculus}~\cite{Oost:Raam:94} suitably explicit.
  \emph{Sharing graphs} as known from the theory of optimal 
  reduction~\cite{Aspe:Guer:98,Oost:Looi:Zwit:04} suggest themselves,
  in both the first- and higher-order cases, with 
  Combinatory Logic and the $\lambda\beta$-calculus, respectively,
  concrete systems to try potential characterisations on.
\end{itemize}

All our results are effective and constructive, but we did not study their complexity.
However, we do hope that the concrete representations of permutation equivalence classes
by means of tragrs (certain graphs) and greedy multistep reduction
(certain terms) could be useful for such, cf.\ \cite{Deho:15}.

\paragraph{
Acknowledgments}

%

We thank Jan Willem Klop for making the initial remark,
Nao Hirokawa and the reviewers and participants 
of Termgraph 2022 in Haifa for feedback, and the 
reviewers for thorough reading and many 
helpful comments and suggestions.


\bibliographystyle{eptcs}
\bibliography{eptcs-vvo-tragr}

\appendix
\section{Topologically multi-sorting the tragr of Figure~\ref{fig:tragr} (right)} \label{app:stages}

Reading back the tragr in Figure~\ref{fig:tragr} (right)
by topological multi-sorting gives rise to the following $6$ stages. 
\def\dfig{$\stremp$}%
\def\capfig{$1^\text{st}$ multistep: $\A\B\beta$}%
\def\ccpfig{$2^\text{nd}$ multistep: $\A\alpha\beta$}%
\def\cdpfig{$3^\text{rd}$ multistep: $\beta \A\A\B$}%
\def\cepfig{$4^\text{th}$ multistep: $\B\beta \A\A\B$}%
\def\cfpfig{$5^\text{th}$ multistep: $\alpha\beta \A\A\B$}%
\def\cgpfig{Last (empty) multistep: $\A\B\A\A\B\A\A\B$}%
\begin{center}
\scalebox{.29}{\input{tragrlayers1.txt}}\quad\quad%
\scalebox{.29}{\input{tragrlayers3.txt}}
\end{center}
\begin{center}
\scalebox{.29}{\input{tragrlayers4.txt}}\quad\quad%
\scalebox{.29}{\input{tragrlayers5.txt}}
\end{center}
\begin{center}
\scalebox{.29}{\input{tragrlayers6.txt}}\quad\quad%
\scalebox{.29}{\input{tragrlayers7.txt}}
\end{center}
Vertically composing the corresponding $5$ multisteps (not taking the 
last empty multistep into account) yields the multistep reduction
$\A\B\beta \cdot
 \A\alpha\beta \cdot
 \beta \A\A\B \cdot
 \B\beta \A\A\B \cdot
 \alpha\beta \A\A\B$. That is, we have read back $\acnv'$
 from the joint tragr in Figure~\ref{fig:tragr},
 of $\acnv$ and $\acnv'$ in Example~\ref{exa:srs}!
\end{document}